\renewcommand{\le}{\leqslant}
\renewcommand{\ge}{\geqslant}
\newcommand{\ol}{\overline}
\newcommand{\eps}{\varepsilon}
\newcommand{\emp}{\emptyset}
\newcommand{\Sig}{\Sigma}
\newcommand{\bi}{\begin{itemize}}
\newcommand{\ei}{\end{itemize}}
\newcommand{\be}{\begin{enumerate}}
\newcommand{\ee}{\end{enumerate}}
\newcommand{\bd}{\begin{description}}
\newcommand{\ed}{\end{description}}
\newcommand{\eq}{\end{quote}}
\newcommand{\etc}{\mbox{\it etc.}}
\newcommand{\eg}{\mbox{\it e.g.}}
\newcommand{\cA}{{\mathcal A}}
\newcommand{\cB}{{\mathcal B}}
\newcommand{\cD}{{\mathcal D}}
\newcommand{\cN}{{\mathcal N}}
\newcommand{\cU}{{\mathcal U}}
\newcommand{\cX}{{\mathcal X}}
\newcommand{\rev}{R}
\newcommand{\deter}{D}
\newcommand{\mini}{M}
\newcommand{\trim}{T}
\newcommand{\qedb}{\hfill$\blacksquare$} 
\newtheorem{proposition}{Proposition}
\newtheorem{example}{Example}
\newtheorem{lemma}{Lemma}
\newtheorem{theorem}{Theorem}
\newtheorem{corollary}{Corollary}
\newtheorem{definition}{Definition}
\journal{Theoretical Computer Science}
\begin{document}

\begin{frontmatter}

\title{Theory of \'Atomata\tnoteref{support}}
\tnotetext[support]{This work was supported 
by the Natural Sciences and Engineering Research Council of Canada under grant No.~OGP0000871, 
the ERDF funded Estonian Center of Excellence in Computer Science, EXCS, 
the Estonian Science Foundation grant 7520, and the Estonian Ministry of Education and 
Research target-financed research theme no. 0140007s12.}

\author[JB]{Janusz~Brzozowski}
\ead{brzozo@uwaterloo.ca}
\author[HT]{Hellis~Tamm}
\ead{hellis@cs.ioc.ee}
\address[JB]{David R. Cheriton School of Computer Science, University of Waterloo,
Waterloo, ON, Canada N2L 3G1}
\address[HT]{Institute of Cybernetics, Tallinn University of Technology,
Akadeemia tee 21, 12618 Tallinn, Estonia}

\begin{abstract}
We show that every regular language defines a unique nondeterministic finite 
automaton (NFA), which we call ``\'atomaton'', whose states are the ``atoms'' of 
the language, that is, non-empty intersections of complemented or uncomplemented 
left quotients of the language.
We describe methods of constructing the \'atomaton, and prove that it is 
isomorphic to the reverse automaton of the minimal deterministic finite automaton 
(DFA) of the reverse language.
We study ``atomic'' NFAs in which the right language of every state 
is a union of atoms. We generalize Brzozowski's double-reversal method for 
minimizing a deterministic finite automaton (DFA), showing that the result of 
applying the subset construction to an NFA is a minimal DFA if and only if 
the reverse of the NFA is atomic.
We prove that Sengoku's claim that his method always finds a minimal NFA is false.
\end{abstract}

\begin{keyword}
regular languages \sep left quotients
\end{keyword}

\end{frontmatter}

\section{Introduction}

Nondeterministic finite automata (NFAs) were introduced by 
Rabin and Scott~\cite{RaSc59} in 1959 and still
play a major role in the theory of automata. 
For many purposes it is necessary to convert an NFA to a deterministic finite automaton (DFA).
In particular, for each NFA there exists a minimal DFA, unique up to isomorphism.
This DFA is uniquely defined by every regular language, and uses the left quotients 
of the language as states. As well, it is  possible to associate an NFA with each DFA, 
and this is the subject of the present paper. Our NFA is also uniquely defined by 
every regular language, and uses non-empty intersections of complemented and 
uncomplemented quotients---the ``atoms'' of the language---as states. 

It appears that the NFA most often associated with a regular language is 
the universal automaton, sometimes  appearing under different names. 
A substantial survey by Lombardy and Sakarovitch~\cite{LoSa07} on the subject 
of the universal automaton contains its history and a detailed discussion 
of its properties. We refer the reader to that paper, and mention only that 
research related to the universal automaton goes back to the 
1970's: \eg, in~\cite{Car70} as reported in~\cite{ADN92}, \cite{Con71,KaWe70}.

We call our NFA the ``\'atomaton'' because it is based on the atoms of a regular language; 
we add the accent\footnote{The word should be pronounced with 
the accent on the first a.} to minimize the possible confusion between ``automaton'' and ``atomaton''.
We prove that the \'atomaton of a regular language $L$ is isomorphic to the reverse 
automaton of the minimal DFA of the reverse language $L^R$.

We  introduce ``atomic'' automata, in which the right language of any state is 
a union of some atoms. 
This generalizes residual automata~\cite{DLT02} 
in which the right language of any state is a left quotient (which we prove 
to be a union of atoms), and includes also  \'atomata (where the right language 
of any state is an atom), DFAs, and universal automata.

We characterize the class of NFAs for which the subset construction 
yields a minimal DFA. More specifically, we show that the subset construction 
applied to an NFA produces a minimal DFA if and only if the reverse automaton 
of that NFA is atomic. 
This generalizes Brzozowski's method for DFA minimization by double reversal~\cite{Brz63}.

We study reduced atomic NFAs associated with a given regular language. We formalize Sengoku's approach~\cite{Sen92} to finding minimal NFAs, and prove that it does not always work, since there exist languages for which no atomic NFA is minimal.

Section~\ref{sec:LAE} recalls properties of regular languages, finite automata, and systems of language equations.
In Section~\ref{sec:partial}, we study the right languages of the states of any NFA; we call these languages ``partial quotients". We define partial atoms and partial \'atomata and study their properties.
Quotients and atoms of a regular language and the \'atomaton are introduced and 
studied in Section~\ref{sec:atoms}.
In Section~\ref{sec:atomic}, we examine NFAs in which the right language of every state is a union of atoms, and we extend
Brzozowski's method~\cite{Brz63} of DFA minimization.
In Section~\ref{sec:reduced}, we study reduced atomic NFAs accepting a given regular language, and prove that Sengoku's claim~\cite{Sen92} that his method always finds a minimal NFA is false.
Section~\ref{sec:conc} closes the paper.

A much shorter version of some of the results presented here has previously 
appeared in~\cite{BrTa11}. Note that here we use a definition of an atom 
which is slightly different from that of~\cite{BrTa11} for reasons explained 
at the end of Section~\ref{sec:atoms}.

\section{Languages, Automata and Equations}
\label{sec:LAE}
If $\Sig$ is a non-empty finite alphabet, then $\Sig^*$ is the free monoid generated by $\Sig$.
 A \emph{word} is any element of $\Sig^*$, and the empty word is $\eps$. The length of a word $w$ is $|w|$.
A \emph{language} over $\Sig$ is any subset of $\Sig^*$. 

The following   operations are defined on languages over $\Sig$:  
\emph{complement} \linebreak ($\ol{L}=\Sigma^*\setminus L$),  
\emph{union}  ($K\cup L$),  
\emph{intersection} ($K\cap L$),  
\emph{product}, usually called  
\emph{concatenation} or \emph{catenation},   
($KL=\{w\in \Sigma^*\mid w=uv, u\in K, v\in L\}$), \emph{positive closure} ($L^+=\bigcup_{i\ge 1}L^i$), and  \emph{star} ($L^*=\bigcup_{i\ge 0}L^i$).
The \emph{reverse $w^R$ of a word} $w\in\Sigma^*$ is defined as follows: 
$\eps^R=\eps$, and $(wa)^R=aw^R$, where $a\in\Sig$.
The \emph{reverse of a language} $L$ is denoted 
by $L^R$ and defined as $L^R=\{w^R\mid w\in L\}$.

A~\emph{nondeterministic finite automaton} is a quintuple 
$\cN=(Q, \Sig, \delta, I,F)$, where 
$Q$ is a finite, non-empty set of \emph{states}, 
$\Sig$ is a finite non-empty \emph{alphabet}, 
$\delta:Q\times \Sig\to 2^Q$ is the  \emph{transition function},
$I\subseteq  Q$ is the set of  \emph{initial states},
and $F\subseteq Q$ is the set of \emph{final states}.
As usual, we extend the transition function to functions 
$\delta':Q\times \Sig^*\to 2^Q$, and $\delta'':2^Q\times \Sig^*\to 2^Q$.
We do not distinguish these functions notationally, but use $\delta$ for all three.
The \emph{language accepted} by an NFA $\cN$ is 
$L(\cN)=\{w\in\Sig^*\mid \delta(I,w)\cap F\neq \emp\}$.
Two NFAs are \emph{equivalent} if they accept the same language. 
The  \emph{left language} of a state $q$ of $\cN$ is 
$L_{I,q}(\cN)=\{w\in\Sig^* \mid q\in \delta(I,w)\}$. 
The \emph{right language} of $q$ is 
$L_{q,F}(\cN)=\{w\in\Sig^* \mid \delta(q,w)\cap F\neq\emp\}$.
The \emph{right language} of a set $S$ of states of $\cN$ is
$L_{S,F}(\cN)=\bigcup_{q\in S} L_{q,F}(\cN)$; hence
$L(\cN)=L_{I,F}(\cN)$.
A state is \emph{unreachable} if its left language is empty.
A state is \emph{empty} if its right language is empty.
An NFA is \emph{trim} if it has no empty or unreachable states.
Two states  of an NFA are \emph{equivalent} if their right languages are identical. 
An NFA is \emph{reduced} if it has no equivalent states. 
An NFA is \emph{minimal} if it has the minimal number of states among all
the equivalent NFAs.

A \emph{deterministic finite automaton} is a quintuple 
$\cD=(Q, \Sig, \delta, q_0,F)$, where
$Q$, $\Sig$, and $F$ are as in an NFA, 
$\delta:Q\times \Sig\to Q$ is the transition function, 
and $q_0$ is the initial state. 
A DFA is an NFA in which the set of initial states is $\{q_0\}$ and 
the range of the transition function is restricted to singletons $\{q\}$, $q\in Q$.

A DFA is \emph{minimal} if it has no unreachable states and 
no two of its states are equivalent.

We use the following operations on automata: \\
\hglue 10 pt 1.
\emph{Determinization} ($\deter$) 
applied to an NFA $\cN$ yields a DFA $\cN^{\deter}$ obtained by the well-known subset construction, where only subsets (including the empty subset) reachable from the initial subset of $\cN^\deter$ are used. \\
\hglue 10 pt 2.
 \emph{Reversal} ($\rev$) applied to an NFA $\cN$ yields an NFA $\cN^{\rev}$, where initial and final states of $\cN$ 
are interchanged in $\cN^\rev$ and all the transitions  
are reversed. \\
\hglue 10 pt 3.
\emph{Trimming} ($\trim$) applied to an NFA $\cN$ accepting 
a non-empty language deletes from $\cN$ all unreachable and empty states,
along with the incident transitions, yielding an NFA $\cN^{\trim}$. \\
\hglue 10 pt 4.
\emph{Minimization}  ($\mini$) applied to a DFA $\cD$ yields 
the minimal DFA $\cD^{\mini}$ equivalent to $\cD$.

A trim DFA $\cD$ is \emph{bideterministic} if also $\cD^{\rev}$ is a DFA.
A language is \emph{bideterministic} if it is accepted by a bideterministic 
DFA.

\begin{example}
\label{ex:aut_ops}
Figure~\ref{fig:aut_ops}~(a) shows an NFA $\cN$, where the initial states are indicated by incoming arrows and final states by double circles. Its determinized DFA $\cN^{\deter}$ is in 
Fig.~\ref{fig:aut_ops}~(b), where braces around sets are omitted.  
The minimal equivalent DFA $\cD=\cN^{\deter\mini}$   of $\cN^{\deter}$  
is in Fig.~\ref{fig:aut_ops}~(c),
where the equivalent states $\{1\}$, $\{0,2\}$,  and $\{1,2\}$ are represented by 
$\{0,2\}$. The reversed and trimmed version $\cD^{\rev\trim}=\cN^{\deter\mini\rev\trim}$ of $\cD$ 
is in Fig.~\ref{fig:aut_ops}~(d). \qedb
\end{example}
\begin{figure}[hbt]
\begin{center}
\setlength{\unitlength}{0.00043745in}
\begingroup\makeatletter\ifx\SetFigFont\undefined%
\gdef\SetFigFont#1#2#3#4#5{%
  \reset@font\fontsize{#1}{#2pt}%
  \fontfamily{#3}\fontseries{#4}\fontshape{#5}%
  \selectfont}%
\fi\endgroup%
{\renewcommand{\dashlinestretch}{30}
\begin{picture}(10284,4052)(0,-10)
\put(8183,1622){\makebox(0,0)[lb]{\smash{{\SetFigFont{10}{12.0}{\familydefault}{\mddefault}{\updefault}$0$}}}}
\put(5186.000,3444.000){\arc{364.664}{2.4959}{6.9871}}
\blacken\path(5339.259,3448.869)(5325.000,3326.000)(5395.402,3427.703)(5339.259,3448.869)
\put(8262.000,3455.000){\arc{364.664}{2.4959}{6.9871}}
\blacken\path(8415.259,3459.869)(8401.000,3337.000)(8471.402,3438.703)(8415.259,3459.869)
\put(1196.691,3090.686){\arc{364.730}{2.4963}{6.9846}}
\blacken\path(1350.223,3095.873)(1336.000,2973.000)(1406.372,3074.724)(1350.223,3095.873)
\put(3724.691,3082.686){\arc{364.730}{2.4963}{6.9846}}
\blacken\path(3878.223,3087.873)(3864.000,2965.000)(3934.372,3066.724)(3878.223,3087.873)
\put(10009.686,1740.000){\arc{364.258}{4.0681}{8.5565}}
\blacken\path(10014.856,1586.630)(9892.000,1601.000)(9993.640,1530.506)(10014.856,1586.630)
\put(3316.686,576.000){\arc{364.258}{4.0681}{8.5565}}
\blacken\path(3321.856,422.630)(3199.000,437.000)(3300.640,366.506)(3321.856,422.630)
\put(5182,1706){\ellipse{720}{720}}
\put(5186,440){\ellipse{720}{720}}
\put(6446,1707){\ellipse{720}{720}}
\put(6453,454){\ellipse{720}{720}}
\put(5184,437){\ellipse{630}{630}}
\put(5195,2999){\ellipse{720}{720}}
\put(8265,1729){\ellipse{720}{720}}
\put(9529,1730){\ellipse{720}{720}}
\put(8269,3005){\ellipse{720}{720}}
\put(1194,2635){\ellipse{720}{720}}
\put(1194,2635){\ellipse{630}{630}}
\put(3718,2637){\ellipse{720}{720}}
\put(2458,2638){\ellipse{720}{720}}
\put(3718,2637){\ellipse{630}{630}}
\put(1583,590){\ellipse{720}{720}}
\put(2842,588){\ellipse{630}{630}}
\put(2843,589){\ellipse{720}{720}}
\put(6441,1709){\ellipse{630}{630}}
\put(6452,452){\ellipse{630}{630}}
\put(9536,1726){\ellipse{630}{630}}
\path(5550,446)(6082,446)
\blacken\path(5962.000,416.000)(6082.000,446.000)(5962.000,476.000)(5962.000,416.000)
\path(6450,1339)(6450,807)
\blacken\path(6420.000,927.000)(6450.000,807.000)(6480.000,927.000)(6420.000,927.000)
\path(5003,1384)(5003,754)
\blacken\path(4973.000,874.000)(5003.000,754.000)(5033.000,874.000)(4973.000,874.000)
\path(5363,751)(5363,1381)
\blacken\path(5393.000,1261.000)(5363.000,1381.000)(5333.000,1261.000)(5393.000,1261.000)
\path(6188,723)(5445,1451)
\blacken\path(5551.709,1388.445)(5445.000,1451.000)(5509.718,1345.588)(5551.709,1388.445)
\path(6085,2459)(6333,2038)
\blacken\path(6246.245,2126.168)(6333.000,2038.000)(6297.942,2156.621)(6246.245,2126.168)
\path(5190,2081)(5190,2621)
\blacken\path(5220.000,2501.000)(5190.000,2621.000)(5160.000,2501.000)(5220.000,2501.000)
\path(9168,2482)(9416,2061)
\blacken\path(9329.245,2149.168)(9416.000,2061.000)(9380.942,2179.621)(9329.245,2149.168)
\path(8258,2099)(8258,2639)
\blacken\path(8288.000,2519.000)(8258.000,2639.000)(8228.000,2519.000)(8288.000,2519.000)
\path(1509,2815)(2139,2815)
\blacken\path(2019.000,2785.000)(2139.000,2815.000)(2019.000,2845.000)(2019.000,2785.000)
\path(2139,2455)(1509,2455)
\blacken\path(1629.000,2485.000)(1509.000,2455.000)(1629.000,2425.000)(1629.000,2485.000)
\path(2004,3348)(2252,2927)
\blacken\path(2165.245,3015.168)(2252.000,2927.000)(2216.942,3045.621)(2165.245,3015.168)
\path(3272,3329)(3520,2908)
\blacken\path(3433.245,2996.168)(3520.000,2908.000)(3484.942,3026.621)(3433.245,2996.168)
\path(3353,2633)(2813,2633)
\blacken\path(2933.000,2663.000)(2813.000,2633.000)(2933.000,2603.000)(2933.000,2663.000)
\path(6075,1698)(5565,1698)
\blacken\path(5685.000,1728.000)(5565.000,1698.000)(5685.000,1668.000)(5685.000,1728.000)
\path(9234,1923)(8604,1923)
\blacken\path(8724.000,1953.000)(8604.000,1923.000)(8724.000,1893.000)(8724.000,1953.000)
\path(8589,1563)(9219,1563)
\blacken\path(9099.000,1533.000)(9219.000,1563.000)(9099.000,1593.000)(9099.000,1533.000)
\path(2397,1281)(2645,860)
\blacken\path(2558.245,948.168)(2645.000,860.000)(2609.942,978.621)(2558.245,948.168)
\path(2540,400)(1910,400)
\blacken\path(2030.000,430.000)(1910.000,400.000)(2030.000,370.000)(2030.000,430.000)
\path(1887,768)(2517,768)
\blacken\path(2397.000,738.000)(2517.000,768.000)(2397.000,798.000)(2397.000,738.000)
\path(1276,2283)(1277,2282)(1279,2281)
	(1283,2278)(1289,2273)(1298,2267)
	(1310,2259)(1325,2249)(1342,2236)
	(1363,2222)(1387,2206)(1414,2188)
	(1443,2169)(1475,2149)(1509,2128)
	(1544,2107)(1581,2085)(1620,2063)
	(1660,2042)(1701,2020)(1744,1999)
	(1788,1979)(1833,1959)(1880,1941)
	(1929,1923)(1980,1906)(2033,1891)
	(2088,1877)(2146,1864)(2206,1853)
	(2269,1844)(2334,1837)(2401,1833)
	(2469,1832)(2537,1834)(2603,1838)
	(2667,1845)(2728,1855)(2787,1866)
	(2842,1879)(2895,1893)(2945,1909)
	(2993,1926)(3039,1944)(3083,1963)
	(3125,1983)(3166,2004)(3205,2025)
	(3243,2046)(3279,2068)(3314,2090)
	(3347,2112)(3379,2134)(3409,2155)
	(3438,2175)(3464,2194)(3488,2212)
	(3509,2228)(3527,2243)(3543,2255)
	(3556,2266)(3567,2274)(3574,2280)(3586,2290)
\blacken\path(3513.019,2190.131)(3586.000,2290.000)(3474.608,2236.225)(3513.019,2190.131)
\put(7149,349){\makebox(0,0)[lb]{\smash{{\SetFigFont{10}{12.0}{\familydefault}{\mddefault}{\updefault}$b$}}}}
\put(5438,859){\makebox(0,0)[lb]{\smash{{\SetFigFont{10}{12.0}{\familydefault}{\mddefault}{\updefault}$a$}}}}
\put(5881,1099){\makebox(0,0)[lb]{\smash{{\SetFigFont{10}{12.0}{\familydefault}{\mddefault}{\updefault}$a$}}}}
\put(6547,940){\makebox(0,0)[lb]{\smash{{\SetFigFont{10}{12.0}{\familydefault}{\mddefault}{\updefault}$b$}}}}
\put(5107,333){\makebox(0,0)[lb]{\smash{{\SetFigFont{10}{12.0}{\familydefault}{\mddefault}{\updefault}$1$}}}}
\put(5092,1592){\makebox(0,0)[lb]{\smash{{\SetFigFont{10}{12.0}{\familydefault}{\mddefault}{\updefault}$0$}}}}
\put(4695,880){\makebox(0,0)[lb]{\smash{{\SetFigFont{10}{12.0}{\familydefault}{\mddefault}{\updefault}$b$}}}}
\put(5723,123){\makebox(0,0)[lb]{\smash{{\SetFigFont{10}{12.0}{\familydefault}{\mddefault}{\updefault}$b$}}}}
\put(5108,2906){\makebox(0,0)[lb]{\smash{{\SetFigFont{10}{12.0}{\familydefault}{\mddefault}{\updefault}$\emp$}}}}
\put(5295,2223){\makebox(0,0)[lb]{\smash{{\SetFigFont{10}{12.0}{\familydefault}{\mddefault}{\updefault}$a$}}}}
\put(8363,2250){\makebox(0,0)[lb]{\smash{{\SetFigFont{10}{12.0}{\familydefault}{\mddefault}{\updefault}$a$}}}}
\put(1104,3363){\makebox(0,0)[lb]{\smash{{\SetFigFont{10}{12.0}{\familydefault}{\mddefault}{\updefault}$b$}}}}
\put(3623,3348){\makebox(0,0)[lb]{\smash{{\SetFigFont{10}{12.0}{\familydefault}{\mddefault}{\updefault}$b$}}}}
\put(1682,2898){\makebox(0,0)[lb]{\smash{{\SetFigFont{10}{12.0}{\familydefault}{\mddefault}{\updefault}$a$}}}}
\put(1771,2170){\makebox(0,0)[lb]{\smash{{\SetFigFont{10}{12.0}{\familydefault}{\mddefault}{\updefault}$b$}}}}
\put(2326,1923){\makebox(0,0)[lb]{\smash{{\SetFigFont{10}{12.0}{\familydefault}{\mddefault}{\updefault}$b$}}}}
\put(3001,2371){\makebox(0,0)[lb]{\smash{{\SetFigFont{10}{12.0}{\familydefault}{\mddefault}{\updefault}$a$}}}}
\put(5730,1810){\makebox(0,0)[lb]{\smash{{\SetFigFont{10}{12.0}{\familydefault}{\mddefault}{\updefault}$a$}}}}
\put(10269,1640){\makebox(0,0)[lb]{\smash{{\SetFigFont{10}{12.0}{\familydefault}{\mddefault}{\updefault}$b$}}}}
\put(7981,3780){\makebox(0,0)[lb]{\smash{{\SetFigFont{10}{12.0}{\familydefault}{\mddefault}{\updefault}$a,b$}}}}
\put(4906,3782){\makebox(0,0)[lb]{\smash{{\SetFigFont{10}{12.0}{\familydefault}{\mddefault}{\updefault}$a,b$}}}}
\put(8852,2036){\makebox(0,0)[lb]{\smash{{\SetFigFont{10}{12.0}{\familydefault}{\mddefault}{\updefault}$a$}}}}
\put(8844,1258){\makebox(0,0)[lb]{\smash{{\SetFigFont{10}{12.0}{\familydefault}{\mddefault}{\updefault}$b$}}}}
\put(2088,870){\makebox(0,0)[lb]{\smash{{\SetFigFont{10}{12.0}{\familydefault}{\mddefault}{\updefault}$a$}}}}
\put(2081,100){\makebox(0,0)[lb]{\smash{{\SetFigFont{10}{12.0}{\familydefault}{\mddefault}{\updefault}$b$}}}}
\put(3580,468){\makebox(0,0)[lb]{\smash{{\SetFigFont{10}{12.0}{\familydefault}{\mddefault}{\updefault}$b$}}}}
\put(8184,2910){\makebox(0,0)[lb]{\smash{{\SetFigFont{10}{12.0}{\familydefault}{\mddefault}{\updefault}$\emp$}}}}
\put(9240,2913){\makebox(0,0)[lb]{\smash{{\SetFigFont{10}{12.0}{\familydefault}{\mddefault}{\updefault}{\bf (c)}}}}}
\put(6135,2913){\makebox(0,0)[lb]{\smash{{\SetFigFont{10}{12.0}{\familydefault}{\mddefault}{\updefault}{\bf (b)}}}}}
\put(15,2553){\makebox(0,0)[lb]{\smash{{\SetFigFont{10}{12.0}{\familydefault}{\mddefault}{\updefault}{\bf (a)}}}}}
\put(15,483){\makebox(0,0)[lb]{\smash{{\SetFigFont{10}{12.0}{\familydefault}{\mddefault}{\updefault}{\bf (d)}}}}}
\put(1118,2545){\makebox(0,0)[lb]{\smash{{\SetFigFont{10}{12.0}{\familydefault}{\mddefault}{\updefault}$1$}}}}
\put(2379,2545){\makebox(0,0)[lb]{\smash{{\SetFigFont{10}{12.0}{\familydefault}{\mddefault}{\updefault}$0$}}}}
\put(3631,2546){\makebox(0,0)[lb]{\smash{{\SetFigFont{10}{12.0}{\familydefault}{\mddefault}{\updefault}$2$}}}}
\put(1511,482){\makebox(0,0)[lb]{\smash{{\SetFigFont{10}{12.0}{\familydefault}{\mddefault}{\updefault}$0$}}}}
\put(2602,498){\makebox(0,0)[lb]{\smash{{\SetFigFont{10}{12.0}{\familydefault}{\mddefault}{\updefault}$0,2$}}}}
\put(6202,1595){\makebox(0,0)[lb]{\smash{{\SetFigFont{10}{12.0}{\familydefault}{\mddefault}{\updefault}$0,2$}}}}
\put(6210,348){\makebox(0,0)[lb]{\smash{{\SetFigFont{10}{12.0}{\familydefault}{\mddefault}{\updefault}$1,2$}}}}
\put(9286,1632){\makebox(0,0)[lb]{\smash{{\SetFigFont{10}{12.0}{\familydefault}{\mddefault}{\updefault}$0,2$}}}}
\put(6897.686,464.000){\arc{364.258}{4.0681}{8.5565}}
\blacken\path(6902.856,310.630)(6780.000,325.000)(6881.640,254.506)(6902.856,310.630)
\end{picture}
}
\end{center}
\caption{(a) An NFA $\cN$; (b) $\cN^{\deter}$; (c) $\cN^{\deter\mini}$; 
(d) $\cN^{\deter\mini\rev\trim}$.} 
\label{fig:aut_ops}
\end{figure}

The \emph{left quotient}, or simply \emph{quotient,} of a language $L$ 
by a word $w$ is  the language $w^{-1}L=\{x\in \Sig^*\mid wx\in L \}$. 
Left quotients are also known as \emph{right residuals}.
Dually, the \emph{right quotient} of a language $L$ by a word $w$ is  
the language $Lw^{-1}=\{x\in \Sig^*\mid xw\in L \}$. 
Evidently, if $\cN$ is an NFA and $x$ is in $L_{I,q}(\cN)$,  
then $L_{q,F}(\cN)\subseteq x^{-1}(L(\cN))$.

The \emph{quotient DFA} of a regular language $L$ is 
$\cD=(Q, \Sig, \delta, q_0,F)$, where $Q=\{w^{-1}L\mid w\in\Sig^*\}$, 
$\delta(w^{-1}L,a)=a^{-1}(w^{-1}L)$, 
$q_0=\eps^{-1}L=L$,  and
$F=\{w^{-1}L \mid \eps\in w^{-1}L\}$.
The quotient DFA of $L$ is the minimal DFA for $L$.

The following is from~\cite{LoSa07}:
If $L\subseteq \Sig^*$, a \emph{subfactorization} 
of $L$ is  a pair $(X, Y)$ of languages 
over $\Sig$ such that $X Y \subseteq L $.
A~\emph{factorization} of $L$ is 
a subfactorisation $(X, Y)$ such that, if $X \subseteq X'$,  $Y \subseteq Y'$,
and $X'Y' \subseteq L$ for any pair $(X',Y')$, then $X = X'$ and $Y = Y'$.
The \emph{universal automaton} of $L$ is 
$\cU_L=(Q,\Sigma,\delta,I,F)$ where $Q$ is the set of all factorizations
of $L$, $I=\{(X,Y)\in Q \;|\; \eps\in X\}$,
$F=\{(X,Y)\in Q \;|\; \eps\in Y\}$, and
$(X',Y')\in\delta((X,Y),a)$ if and only if $XaY'\subseteq L$.

For any language $L$ let $L^\eps=\emp$ if $\eps\not\in L$ and $L^\eps=\{\eps\}$
otherwise.
Also, let $n\ge 1$ and let $[n]=\{0,\ldots,n-1\}$.
A~\emph{nondeterministic system of equations (NSE)} with $n$ variables 
$L_0,\ldots, L_{n-1}$ is a set of language equations \\
\begin{equation}
L_i = \bigcup_{a\in \Sig} a(\bigcup_{j\in J_{i,a}} L_j) \cup L_i^\eps \quad i=0,\ldots,n-1,
\end{equation}
where $J_{i,a}\subseteq [n]$,  together with an 
\emph{initial set of variables}  $\{L_i\mid i\in I\}$, where $I\subseteq [n]$ is an index set.
The equations are assumed to have been simplified by the rules
$a\emp=\emp \text{ and } K\cup \emp= \emp\cup K=K, \text{ for any language } K.$
Let $L_{i,a}=\bigcup_{j\in J_{i,a}} L_j$; then $L_{i,a}=a^{-1}L_i$ is the left quotient of $L_i$ by $a$.
The language defined by an NSE is $L=\bigcup_{i\in I} L_i$.

Each NSE defines a unique NFA $\cN$ and \emph{vice versa}. States of $\cN$ correspond 
to the variables $L_i$, there is a transition  $L_i \stackrel{a}{\rightarrow} L_j$ 
in $\cN$ if and only if $j\in J_{i,a}$, the set of initial states of $\cN$ is 
$\{L_i \mid i\in I\}$, and the set of final states is $\{L_i \mid L_i^\eps=\{\eps\}\}$.

If each $L_i$ is a left quotient (that is, a right residual) of the language 
$L=\bigcup_{i\in I} L_i$, then the NSE and the corresponding NFA are called 
\emph{residual}~\cite{DLT02}.

A~\emph{deterministic system of equations (DSE)} 
is an NSE
\begin{equation}
L_i = \bigcup_{a\in \Sig} aL_{i_a} \cup L_i^\eps \quad i=0,\ldots,n-1,
\end{equation}
where ${i_a}\in [n]$, $I=\{0\}$, and 
the empty language $\emp$ is retained if it appears. 

Each DSE defines a unique DFA $\cD$ and \emph{vice versa}. Each state of $\cD$ 
corresponds to a variable $L_i$, there is a transition  
$L_i \stackrel{a}{\rightarrow} L_j$ in $\cD$ if and only if 
$i_a=j$,
the initial state of $\cD$ corresponds to $L_0$, and the set of final states is 
$\{L_i \mid L_i^\eps=\{\eps\}\}$.
In the special case when $\cD$ is minimal, its DSE constitutes its 
\emph{quotient equations}, where every $L_i$ is a quotient of the initial 
language $L_0$.

To simplify the notation, we write $\eps$ instead of $\{\eps\}$ in  equations.
\begin{example}
\label{ex:NSE}
For the NFA of Fig.~\ref{fig:aut_ops}~(a), we have the NSE 
\begin{displaymath}
 \begin{array}{rl}
   L_0&=  bL_1, \\
   L_1&= aL_0 \cup b(L_1 \cup L_2)   \cup \eps, \\
   L_2&= aL_0\cup bL_2 \cup \eps, 
 \end{array}
\end{displaymath}
with the initial set 
$\{L_0,L_2\}$.
The language $L=L_0\cup L_2$ accepted by the DFA of Fig.~\ref{fig:aut_ops}~(b) 
is obtained  from this NSE as shown by the equations below on the left. 
Renaming the unions of variables by new variables corresponding to subsets in 
the subset construction, we get the equations on the right;
for example, $L_0\cup L_2$ is renamed as $L_{\{0,2\}}$.
This is
the DSE for the DFA of Fig.~\ref{fig:aut_ops}~(b). 
\begin{alignat*}{2}
 L_0\cup L_2&= aL_0 \cup b(L_1\cup L_2) \cup\eps,
	\qquad 	&L_{\{0,2\}}&= aL_{\{0\}} \cup b L_{\{1,2\}} \cup \eps, \\
 L_0&= a\emp\cup bL_1, & 
 	L_{\{0\}}&= aL_{\emp}\cup bL_{\{1\}},\\
 L_1\cup L_2&= aL_0 \cup b(L_1\cup L_2)  \cup \eps, &
 	 L_{\{1,2\}}&= aL_{\{0\}} \cup bL_{\{1,2\}}  \cup \eps, \\
L_1&= aL_0\cup b(L_1\cup L_2) \cup \eps, &
	L_{\{1\}}&= aL_{\{0\}}\cup bL_{\{1,2\}} \cup \eps,\\
\emp&= a\emp\cup b\emp.  &    L_{\emp}&= aL_{\emp}\cup bL_{\emp}.  
		\end{alignat*}
 Noting that $L_{\{0,2\}}$, $L_{\{1,2\}}$, and $L_{\{1\}}$ are equivalent, 
we get the quotient equations for the DFA of Fig.~\ref{fig:aut_ops}~(c), 
where $L_{\{0\}}=a^{-1}L_{\{0,2\}}$, $L_{\{0,2\}}=b^{-1}L_{\{0,2\}}$, \etc
\begin{displaymath}
 \begin{array}{rl}
   L_{\{0,2\}}&= aL_{\{0\}} \cup b L_{\{0,2\}} \cup \eps, \\
     L_{\{0\}}&= aL_{\emp}\cup bL_{\{0,2\}}, \\
   L_{\emp}&= aL_{\emp}\cup bL_{\emp}. 
 \end{array}  
\end{displaymath}  
\qedb\end{example}

\section{Partial Quotients, Partial Atoms and Partial \'Atomata}
\label{sec:partial}
\subsection{Introduction and Motivation}
In 1992, Sengoku~\cite{Sen92} defined an NFA to be \emph{disjoint} if the right languages of any two distinct states are disjoint. He noted that a disjoint NFA $\cN$ has exactly one final state, and proved that an NFA is disjoint if and only if $\cN^R$ is deterministic. It follows that if we reverse, determinize\footnote{The reader should note that Sengoku's DFAs are incomplete, and he does not include the empty
state in the  determinized version of an NFA.}, and reverse $\cN$, the resulting NFA $\cN^{RDR}$ is a disjoint NFA equivalent to $\cN$. 

We shall show that another NFA obtained from $\cN$ by a completely different process turns out to be isomorphic to $\cN^{RDR}$. In our approach we start with the right languages of states of $\cN$, which we call \emph{partial quotients of $\cN$}. This terminology is logical, since the right language of a state of $\cN$ that is reached by word $w\in\Sig^*$ is always a subset of the quotient of $L(\cN)$ by $w$. Next we construct all nonempty intersections of complemented and uncomplemented partial quotients of $\cN$, and refer to these languages as 
\emph{partial atoms of $\cN$}. These partial atoms become states of an NFA which we call 
\emph{partial \'atomaton of $\cN$}. We then prove that the partial \'atomaton of $\cN$ is isomorphic to $\cN^{RDR}$.

We begin with a simple example to illustrate the formal ideas that follow.

\begin{figure}[t]
\begin{center}
\setlength{\unitlength}{0.00043745in}
\begingroup\makeatletter\ifx\SetFigFont\undefined%
\gdef\SetFigFont#1#2#3#4#5{%
  \reset@font\fontsize{#1}{#2pt}%
  \fontfamily{#3}\fontseries{#4}\fontshape{#5}%
  \selectfont}%
\fi\endgroup%
{\renewcommand{\dashlinestretch}{30}
\begin{picture}(7860,3207)(0,-10)
\put(7332,1540){\makebox(0,0)[lb]{\smash{{\SetFigFont{10}{12.0}{\familydefault}{\mddefault}{\updefault}$02$}}}}
\put(4973.000,2107.000){\arc{364.664}{2.4959}{6.9871}}
\blacken\thicklines
\path(5118.214,2128.974)(5112.000,1989.000)(5189.516,2105.715)(5118.214,2128.974)
\thinlines
\put(2085.000,3000.000){\arc{364.664}{2.4959}{6.9871}}
\blacken\thicklines
\path(2230.214,3021.974)(2224.000,2882.000)(2301.516,2998.715)(2230.214,3021.974)
\thinlines
\put(2079.309,243.314){\arc{364.730}{5.6379}{10.1262}}
\blacken\thicklines
\path(1934.779,220.986)(1940.000,361.000)(1863.313,243.738)(1934.779,220.986)
\thinlines
\put(6237,737){\ellipse{720}{720}}
\put(6237,2530){\ellipse{720}{720}}
\put(6235,739){\ellipse{630}{630}}
\put(7492,1644){\ellipse{720}{720}}
\put(815,1630){\ellipse{720}{720}}
\put(2082,730){\ellipse{720}{720}}
\put(2082,2523){\ellipse{720}{720}}
\put(2080,732){\ellipse{630}{630}}
\put(4970,1637){\ellipse{720}{720}}
\path(6057,2222)(6057,1097)
\blacken\thicklines
\path(6019.500,1232.000)(6057.000,1097.000)(6094.500,1232.000)(6019.500,1232.000)
\thinlines
\path(6417,1052)(6417,2177)
\blacken\thicklines
\path(6454.500,2042.000)(6417.000,2177.000)(6379.500,2042.000)(6454.500,2042.000)
\thinlines
\path(6556,2314)(7186,1864)
\blacken\thicklines
\path(7054.350,1911.952)(7186.000,1864.000)(7097.942,1972.982)(7054.350,1911.952)
\blacken\path(5378.650,1319.048)(5247.000,1367.000)(5335.058,1258.018)(5378.650,1319.048)
\thinlines
\path(5247,1367)(5877,917)
\path(5015,857)(5015,1262)
\blacken\thicklines
\path(5052.500,1127.000)(5015.000,1262.000)(4977.500,1127.000)(5052.500,1127.000)
\thinlines
\path(7543,864)(7543,1269)
\blacken\thicklines
\path(7580.500,1134.000)(7543.000,1269.000)(7505.500,1134.000)(7580.500,1134.000)
\thinlines
\path(12,1630)(417,1630)
\blacken\thicklines
\path(282.000,1592.500)(417.000,1630.000)(282.000,1667.500)(282.000,1592.500)
\thinlines
\path(1902,2215)(1902,1090)
\blacken\thicklines
\path(1864.500,1225.000)(1902.000,1090.000)(1939.500,1225.000)(1864.500,1225.000)
\thinlines
\path(2262,1045)(2262,2170)
\blacken\thicklines
\path(2299.500,2035.000)(2262.000,2170.000)(2224.500,2035.000)(2299.500,2035.000)
\thinlines
\path(1092,1360)(1722,910)
\blacken\thicklines
\path(1590.350,957.952)(1722.000,910.000)(1633.942,1018.982)(1590.350,957.952)
\thinlines
\path(1100,1847)(1737,2289)
\blacken\thicklines
\path(1647.464,2181.229)(1737.000,2289.000)(1604.708,2242.849)(1647.464,2181.229)
\thinlines
\path(5283,1865)(5920,2307)
\blacken\thicklines
\path(5830.464,2199.229)(5920.000,2307.000)(5787.708,2260.849)(5830.464,2199.229)
\thinlines
\path(7230,1373)(6593,931)
\blacken\thicklines
\path(6682.536,1038.771)(6593.000,931.000)(6725.292,977.151)(6682.536,1038.771)
\put(6147,647){\makebox(0,0)[lb]{\smash{{\SetFigFont{10}{12.0}{\familydefault}{\mddefault}{\updefault}$2$}}}}
\put(6162,2417){\makebox(0,0)[lb]{\smash{{\SetFigFont{10}{12.0}{\familydefault}{\mddefault}{\updefault}$1$}}}}
\put(6530,1494){\makebox(0,0)[lb]{\smash{{\SetFigFont{10}{12.0}{\familydefault}{\mddefault}{\updefault}$b$}}}}
\put(5323,2192){\makebox(0,0)[lb]{\smash{{\SetFigFont{10}{12.0}{\familydefault}{\mddefault}{\updefault}$a$}}}}
\put(4760,2380){\makebox(0,0)[lb]{\smash{{\SetFigFont{10}{12.0}{\familydefault}{\mddefault}{\updefault}$a$}}}}
\put(7242,2364){\makebox(0,0)[lb]{\smash{{\SetFigFont{10}{12.0}{\familydefault}{\mddefault}{\updefault}$a$}}}}
\put(6889,827){\makebox(0,0)[lb]{\smash{{\SetFigFont{10}{12.0}{\familydefault}{\mddefault}{\updefault}$a$}}}}
\put(4827,205){\makebox(0,0)[lb]{\smash{{\SetFigFont{10}{12.0}{\familydefault}{\mddefault}{\updefault}{\bf (b)}}}}}
\put(5727,1509){\makebox(0,0)[lb]{\smash{{\SetFigFont{10}{12.0}{\familydefault}{\mddefault}{\updefault}$b$}}}}
\put(5284,932){\makebox(0,0)[lb]{\smash{{\SetFigFont{10}{12.0}{\familydefault}{\mddefault}{\updefault}$b$}}}}
\put(6793,2146){\makebox(0,0)[lb]{\smash{{\SetFigFont{10}{12.0}{\familydefault}{\mddefault}{\updefault}$b$}}}}
\put(732,1540){\makebox(0,0)[lb]{\smash{{\SetFigFont{10}{12.0}{\familydefault}{\mddefault}{\updefault}$0$}}}}
\put(192,145){\makebox(0,0)[lb]{\smash{{\SetFigFont{10}{12.0}{\familydefault}{\mddefault}{\updefault}{\bf (a)}}}}}
\put(1992,640){\makebox(0,0)[lb]{\smash{{\SetFigFont{10}{12.0}{\familydefault}{\mddefault}{\updefault}$2$}}}}
\put(2352,100){\makebox(0,0)[lb]{\smash{{\SetFigFont{10}{12.0}{\familydefault}{\mddefault}{\updefault}$a$}}}}
\put(2007,2410){\makebox(0,0)[lb]{\smash{{\SetFigFont{10}{12.0}{\familydefault}{\mddefault}{\updefault}$1$}}}}
\put(2375,1487){\makebox(0,0)[lb]{\smash{{\SetFigFont{10}{12.0}{\familydefault}{\mddefault}{\updefault}$b$}}}}
\put(1572,1502){\makebox(0,0)[lb]{\smash{{\SetFigFont{10}{12.0}{\familydefault}{\mddefault}{\updefault}$b$}}}}
\put(1168,2185){\makebox(0,0)[lb]{\smash{{\SetFigFont{10}{12.0}{\familydefault}{\mddefault}{\updefault}$a$}}}}
\put(1144,925){\makebox(0,0)[lb]{\smash{{\SetFigFont{10}{12.0}{\familydefault}{\mddefault}{\updefault}$a$}}}}
\put(1587,2927){\makebox(0,0)[lb]{\smash{{\SetFigFont{10}{12.0}{\familydefault}{\mddefault}{\updefault}$a$}}}}
\put(4819,1533){\makebox(0,0)[lb]{\smash{{\SetFigFont{10}{12.0}{\familydefault}{\mddefault}{\updefault}$01$}}}}
\thinlines
\put(7486.000,2129.000){\arc{364.664}{2.4959}{6.9871}}
\blacken\thicklines
\path(7631.214,2150.974)(7625.000,2011.000)(7702.516,2127.715)(7631.214,2150.974)
\end{picture}
}
\end{center}
\caption{(a) An NFA $\cN$; (b) partial \'atomaton $\cX$ of $\cN$.} 
\label{fig:partatom}
\end{figure}

\begin{example}
\label{ex:partial}
Consider the NFA $\cN$ of Fig.~\ref{fig:partatom}~(a) recognizing a language $L$ 
over alphabet $\Sig=\{a,b\}$. The right language of state $1$ is $L_1$, 
which is the set of all words having an odd number of $b$s, 
and that of state $2$ is $L_2=\ol{L_1}$, which is the set of all words 
having an even number of $b$s. It follows that  $L_0=a(L_1\cup L_2)=a\Sig^*$.
Each language $L_i$ is a partial quotient of $\cN$, 
since $L_0=L=\eps^{-1}L$, and $L_1, L_2\subseteq a^{-1}L$.
The NSE for $\cN$ is the set of equations 
\begin{displaymath}
 \begin{array}{rl}
L_0&= a(L_1 \cup L_2), \\
L_1&= aL_1 \cup bL_2, \\
L_2&= aL_2 \cup bL_1 \cup \eps, 
 \end{array}
\end{displaymath}
with initial set $\{L_0\}$.

Next, we construct the partial atoms of $\cN$. 
Since $L_2=\ol{L_1}$, the intersections containing $L_1\cap L_2$ and 
$\ol{L_1}\cap \ol{L_2}$ are empty. 
We also note that $L_1\cap \ol{L_2}=L_1$ and $\ol{L_1}\cap L_2=L_2$. 
The partial atoms are the non-empty intersections 
$X_0=L_0\cap L_1\cap \ol{L_2}$, 
$X_1=L_0\cap \ol{L_1}\cap L_2$, 
$X_2=\ol{L_0}\cap L_1\cap \ol{L_2}$, and 
$X_3=\ol{L_0}\cap \ol{L_1}\cap L_2$, and 
they obey the following equations:
\begin{displaymath}
 \begin{array}{rl}
X_0&=L_0\cap L_1\cap \ol{L_2} = aL_1=a(L_0\cap L_1\cap \ol{L_2}) \cup a(\ol{L_0}\cap L_1\cap \ol{L_2}),\\  
X_1&=L_0\cap \ol{L_1}\cap L_2 = aL_2= a(L_0\cap \ol{L_1}\cap L_2) \cup a(\ol{L_0}\cap \ol{L_1}\cap L_2), \\
X_2&= \ol{L_0}\cap L_1\cap \ol{L_2} = bL_2= b(L_0\cap \ol{L_1}\cap L_2) \cup b(\ol{L_0}\cap \ol{L_1}\cap L_2),\\
X_3&= \ol{L_0} \cap \ol{L_1}\cap L_2 = bL_1 \cup \eps=b(L_0\cap L_1\cap \ol{L_2}) \cup b(\ol{L_0}\cap L_1\cap \ol{L_2})\cup \eps.
 \end{array}
\end{displaymath}

If we identify $X_0=L_0\cap L_1\cap \ol{L_2}$ with $01$, 
$X_1=L_0\cap \ol{L_1}\cap L_2$ with $02$, 
$X_2=\ol{L_0}\cap L_1\cap \ol{L_2}$ with $1$, and 
$X_3=\ol{L_0} \cap \ol{L_1}\cap L_2$ with $2$, 
and use $\{X_0, X_1\}$ as the initial set,
we obtain the partial \'atomaton
$\cX$ of Fig.~\ref{fig:partatom}~(b).

From Fig.~\ref{fig:partatom}, we find that $X_0$ ($X_1$) is the set of all words that begin with $a$ and have an odd (even) number of $b$s.
Also, $X_2$ is the set of all words that begin with $b$ and have an odd number of $b$s, and a word is in $X_3$ if it is empty or begins with $b$ and has an even number of $b$s.

Now we construct $\cN^{RDR}$. The steps are shown in Tables~\ref{tab:N}--\ref{tab:Nrdr}, 
where initial (final) states are denoted by right (left) arrows.
Note that Table~\ref{tab:Nrdr} corresponds precisely to Fig.~\ref{fig:partatom}~(b).
\qedb
\end{example}

\begin{table}[hbt]
\begin{minipage}[b]{0.45\linewidth}
\caption{NFA $\cN$.}
\label{tab:N}
{\footnotesize
\begin{center}
$
\begin{array}{|c| c||c| c| c|c|}    
\hline
& \   \ 
&\ \ a \ \ &\ \ b \ \   \\
\hline  
\rightarrow & \  0 \
&   \{1,2\}  &     \\
\hline  
              & 1
&    \{1\}   & \{ 2 \}\\
\hline  
\leftarrow& 2
&   \{2\}   &  \{1\} \\
\hline  
\end{array}
$
\end{center}
}
\end{minipage}
\hspace{.3cm}
\begin{minipage}[b]{0.45\linewidth}
\caption{NFA $\cN^R$.}
\label{tab:Nr}
{\footnotesize
\begin{center}
$
\begin{array}{|c| c||c| c| }    
\hline
& \ \  \ \ 
&\ \ a \ \ &\ \ b \ \ \\
\hline  
\leftarrow & 0
&   &  \\
\hline  
			& \ 1 \
& \{0,1\} & \{2\} \\
\hline  
\rightarrow & 2
 &  \{0,2\} &  \{1\}  \\
\hline  
\end{array}
$
\end{center}}
\end{minipage}
\end{table}

\begin{table}[hbt]
\begin{minipage}[b]{0.4\linewidth}
\caption{DFA $\cN^{RD}$.}
\label{tab:Nrd}
{\footnotesize
\begin{center}
$
\begin{array}{|c| c||c| c| c|c|}    
\hline
& 
&\ \ a \ \ &\ \ b \ \   \\
\hline  
\rightarrow & \{2\}
&   \{0,2\}  &  \{1\}   \\
\hline  
\leftarrow & \{0,2\}
&    \{0,2\}   & \{ 1 \}\\
\hline  
		& \{1\}
&   \{0,1\}   &  \{2\} \\
\hline  
\leftarrow& \{0,1\}
&   \{0,1\}   &  \{2\} \\
\hline  
\end{array}
$
\end{center}}
\end{minipage}
\hspace{.5cm}
\begin{minipage}[b]{0.4\linewidth}
\caption{NFA $\cN^{RDR}$.}
\label{tab:Nrdr}
{\footnotesize
\begin{center}
$
\begin{array}{|c| c||c| c| c|c|}    
\hline
& 
&\ \ a \ \ &\ \ b \ \   \\
\hline  
\leftarrow & \{2\}
&     & \{ \{1\} , \{0,1\} \} \\
\hline  
\rightarrow & \{0,2\}
&   \{ \{2\},  \{0,2\} \}  & \\
\hline  
		& \{1\}
&     & \{ \{2\},  \{0,2\} \} \\
\hline  
\rightarrow& \{0,1\}
&   \{ \{1\} , \{0,1\} \}   &   \\
\hline  
\end{array}
$
\end{center}}
\end{minipage}
\end{table}
In summary, in this section we present the following contributions:
\bi
\item
We define partial \'atomata and study their properties.
\item
We prove that partial \'atomata are isomorphic to Sengoku's disjoint NFAs, obtained from any NFA $\cN$ by finding $\cN^{RDR}$.
\item
We prepare the ground for definitions of \'atomata in the next section.
\'Atomata are special cases of partial \'atomata and have even nicer properties than partial \'atomata.
\ei

\subsection{Partial \'Atomata}
Let $L$ be a non-empty regular language and 
let $\cN=(Q, \Sig, \delta, I,F)$ be any NFA accepting $L$, with
state set $Q=\{q_0,\ldots,q_{k-1}\}$.
Let   $L_i=L_{q_i,F}(\cN)$, $i\in\{0,\ldots,k-1\}$ be 
the  \emph{partial quotients} of $\cN$.
A partial quotient $L_i$ is \emph{initial} if $q_i$ is an initial state of $\cN$,
it is \emph{final} if $q_i$ is a final state.

A \emph{partial atom} of $\cN$ is any non-empty language 
of the form 
$\widetilde{L_0}\cap\widetilde{L_1}\cap \cdots \cap \widetilde{L_{k-1}}$, 
where $\widetilde{L_i}$ is either $L_i$ or $\ol{L_i}$.
A partial atom is \emph{initial} if it has some initial partial quotient 
$L_i$ as a term in its intersection, 
it is \emph{final} if and only if it contains $\eps$.
Since $L(\cN)$ is non-empty, $\cN$ has at least one partial quotient containing $\eps$. 
Hence it has exactly one final partial atom
$\widehat{L_0}\cap\widehat{L_1}\cap \cdots \cap \widehat{L_{k-1}}$, where 
$\widehat{L_i}=L_i$ if $\eps\in L_i$, and $\widehat{L_i}=\ol{L_i}$ otherwise.

If the intersection 
$\ol{L_0}\cap \cdots \cap \ol{L_{k-1}}$ is non-empty, then we call it 
the \emph{negative} partial atom; all the other partial atoms are \emph{positive}. 
Let the set of partial atoms be $X=\{X_0,\ldots,X_{\ell -1}\}$. 
Let the number of positive partial atoms be $h$; this number is either $\ell$ or $\ell-1$.
By convention, $I_X$ is the set of initial partial atoms,
$X_{h -1}$ is the final partial atom, and $X_{\ell-1}$ is the negative partial atom, if  present.
The negative partial atom can never be final, 
since there must be at least one complemented final partial quotient in its intersection.

In the following definition we use a one-one correspondence 
$X_i \leftrightarrow  {\mathbf X}_i$ between partial atoms $X_i$ and the states 
${\mathbf X}_i$ of the NFA $\cX$ defined below.

\begin{definition}
\label{def:partial_atomaton}
The \emph{partial \'atomaton} of  $\cN$, is the NFA defined by 
$\cX=({\mathbf X},\Sig,\eta, {\mathbf I_X},\{{\mathbf X}_{h -1}\})$,
 where ${\mathbf X}=\{{\mathbf X}_i\mid X_i\in X\}$,
 ${\mathbf I_X}=\{{\mathbf X}_i\mid X_i\in I_X\}$, 
 and ${\mathbf X}_j \in \eta({\mathbf X}_i, a)$ if and only if 
$aX_j \subseteq X_i$, 
for all ${\mathbf X_i},{\mathbf X_j}\in {\mathbf X}$ and $a\in\Sig$.
\end{definition}

The partial \'atomaton can be constructed directly from the NSE corresponding to $\cN$, as illustrated in Example~\ref{ex:partial}.

\begin{proposition}
\label{prop:quotient_partial}
The following properties hold for partial atoms:\\
1. Partial atoms are pairwise disjoint, that is, $X_i\cap X_j=\emp$ for all 
$i,j\in\{0,\ldots,\ell-1\}$, $i\neq j$.\\
2. The quotient $w^{-1}L$ of $L$ by $w\in\Sig^*$ is a (possibly empty) union of 
partial atoms.\\
3. The quotient $w^{-1}X_i$ of $X_i$ by $w\in \Sig^*$ is 
a (possibly empty) union of partial atoms.\\
4. Partial atoms define a partition of $\Sig^*$.
\end{proposition}
\begin{proof}
1. If $X_i\neq X_j$, then there exists $g\in\{0,\ldots,k-1\}$ such that 
$L_g$ is a term of $X_i$ and $\ol{L_g}$ is a term of $X_j$, or \emph{vice versa}. 
Hence $X_i\cap X_j=\emp$.\\
\hglue10pt 
2. The empty quotient, if present, is the empty union of partial atoms. 
Every non-empty quotient $K_j$ is a union of some partial quotients. 
As well, every $L_i\neq\emp$ is the union of all the $2^{\ell-1}$ 
intersections that have $L_i$ as a term. 
This includes all partial atoms that have $L_i$ as a term, and possibly 
some empty intersections.\\
\hglue10pt 
3. The quotient of a partial atom $X_i$ by a letter $a\in\Sig$ is an intersection 
of quotients of uncomplemented or complemented partial quotients of $L$. 
Since a quotient of a partial quotient is a union of partial quotients, 
and a quotient of a complemented partial quotient is an intersection of complemented 
partial quotients, the quotient of $X_i$ by $a$ is a union of intersections of
complemented or uncomplemented partial quotients of $L$.
If a partial quotient $L_j$ does not appear as a term in some intersection $Z$ 
of this union, then we ``add it in'' by using the fact that 
$Z=Z\cap(L_j\cup \ol{L_j})=(Z\cap L_j) \cup (Z\cap \ol{L_j})$.
After all the missing partial quotients are so added, we obtain a union of 
partial atoms.
It  follows that $w^{-1}X_i$ is a union of partial atoms of $L$ for every $w\in\Sig^*$.\\
\hglue10pt 
4. Since the union of all the intersections of complemented and uncomplemented partial atoms is $\Sig^*$, the claim follows.
\end{proof}

\begin{lemma}
\label{lem:inclusion_partial}
Let $w,x\in\Sig^*$.
If $wx\in X_i$ and $x\in X_j$ then $wX_j\subseteq X_i$, for $i,j\in\{0,\ldots,\ell-1\}$.
\end{lemma}
\begin{proof}
Assume that $wx\in X_i$ and $x\in X_j$, but suppose $wy\not\in X_i$ for some  
$y\in X_j$. 
Then $x\in w^{-1}X_i$ and $y\not\in w^{-1}X_i$.
By Proposition~\ref{prop:quotient_partial}, Part 3, $w^{-1}X_i$ is a union of 
partial atoms. So, on the one hand, $x\in w^{-1}X_i$ and $x\in X_j$ together imply 
$X_j\subseteq w^{-1}X_i$. On the other hand, from $y\not\in w^{-1}X_i$ and $y\in X_j$, 
we get $X_j\not\subseteq w^{-1}X_i$. So if $wy\not\in X_i$, we have a contradiction.
Hence, $wX_j\subseteq X_i$. 
\end{proof}

\begin{lemma}
\label{lem:path_partial}
For $w\in \Sig^*$, 
${\mathbf X}_j \in \eta({\mathbf X}_i, w)$ if and only if 
$wX_j \subseteq X_i$, for $i,j\in\{0,\ldots,\ell-1\}$.
\end{lemma}
\begin{proof}  
The proof is by induction on the length of $w$. 
If $|w|=0$ and ${\mathbf X}_j \in \eta({\mathbf X}_i, \eps)$, then $i=j$ and 
$\eps X_j \subseteq X_i$. If $|w|=0$ and $\eps X_j \subseteq X_i$, then $i=j$, 
since partial atoms are disjoint; hence ${\mathbf X}_j \in \eta({\mathbf X}_i, \eps)$.
If $|w|=1$, then the lemma holds by Definition~\ref{def:partial_atomaton}.  

Now, let $w=av$, where $a\in\Sig$ and $v\in\Sig^+$, and assume that 
lemma holds for $v$. 
Suppose that ${\mathbf X}_j \in \eta({\mathbf X}_i, av)$.
Then there exists some state ${\mathbf X}_k$ such that
${\mathbf X}_k \in \eta({\mathbf X}_i, a)$  and 
${\mathbf X}_j \in \eta({\mathbf X}_k, v)$.
Thus,  $aX_k \subseteq X_i$ by the definition of partial \'atomaton, 
and $vX_j \subseteq X_k$ by the induction assumption, implying that
$avX_j \subseteq X_i$.
 
Conversely, let $avX_j \subseteq X_i$. Then $vX_j \subseteq a^{-1}X_i$.
Let $x\in X_j$. Then $vx\in a^{-1}X_i$. 
Since by Proposition~\ref{prop:quotient_partial}, Part 3, $a^{-1}X_i$ is a union 
of partial atoms, there exists a partial atom $X_k$ such that $vx\in X_k$.
Since $x\in X_j$,  by Lemma~\ref{lem:inclusion_partial} we get 
$vX_j \subseteq X_k$.
Furthermore, because $avX_j \subseteq X_i$ and $x\in X_j$, we have 
$avx \in X_i$. Since $vx\in X_k$, then $aX_k \subseteq X_i$ by 
Lemma~\ref{lem:inclusion_partial}.

As the lemma holds for $v$ and $a$, $vX_j \subseteq X_k$ implies
${\mathbf X}_j \in \eta({\mathbf X}_k, v)$, and 
$aX_k \subseteq X_i$ implies
${\mathbf X}_k \in \eta({\mathbf X}_i, a)$, showing that
${\mathbf X}_j \in \eta({\mathbf X}_i, av)$.
\end{proof}

\begin{proposition}
\label{prop:right_lang_partial}
The right language of state ${\mathbf X}_i$ of partial \'atomaton $\cX$ is 
the partial atom $X_i$, that is, 
$L_{{\mathbf X}_i,\{{\mathbf X}_{h-1}\}}(\cX)=X_i$, for all $i\in\{0,\ldots,\ell-1\}$.
\end{proposition}
\begin{proof}  
Let $w\in L_{{\mathbf X}_i,\{{\mathbf X}_{h-1}\}}(\cX)$; then 
${\mathbf X}_{h-1}\in \eta({\mathbf X}_i,w)$.
By Lemma~\ref{lem:path_partial}, we have $wX_{h-1} \subseteq X_i$.  
Since $\eps\in X_{h-1}$,  we have $w\in X_i$. 

Now suppose that $w\in X_i$. Then $w\eps\in X_i$, and since $\eps\in X_{h-1}$,
by Lemma~\ref{lem:inclusion_partial} we get $wX_{h-1}\subseteq X_i$. 
By Lemma~\ref{lem:path_partial}, 
${\mathbf X}_{h-1} \in\eta({\mathbf X}_i,w)$, that is, 
$w\in L_{{\mathbf X}_i,\{{\mathbf X}_{h-1}\}}(\cX)$. 
\end{proof}

\begin{proposition}
\label{prop:lang_partial_atomaton}
The language accepted by partial \'atomaton $\cX$ of $L$ is $L$,
that is, $L(\cX)=L$.
\end{proposition}
\begin{proof}
We have $L(\cX)=\bigcup_{X_i\in I_X} L_{{\mathbf X}_i,\{{\mathbf X}_{h-1}\}}(\cX)= 
\bigcup_{X_i\in I_X}X_i$, by Proposition~\ref{prop:right_lang_partial}.
Since $I_X$ is the set of all partial atoms that have some $L_j$ as a term 
such that $q_j\in I$, we also have $L=\bigcup_{X_i\in I_X}X_i$. 
\end{proof}

Next, we will show that $\cX$ is isomorphic to the NFA $\cN^{\rev\deter\rev}$. 
To prove this result, we use the automata 
$\cN^\rev=(Q,\Sig,\delta^\rev, F,I)$,
$\cN^{\rev\deter}=(S,\Sig,\gamma, F, G)$, and
$\cN^{\rev\deter\rev}=(S,\Sig,\gamma^\rev, G, \{F\})$.
This is a generalization of the isomorphism result in~\cite{BrTa13}.

\begin{proposition}[Isomorphism]
\label{prop:isomorphism}
Let $\varphi: {\mathbf X} \to {S}$ be the mapping assigning to state 
${\mathbf X}_i$, given by
$X_i=L_{i_0}\cap\cdots\cap L_{i_{g-1}}\cap\ol{L_{i_{g}}}
\cap\cdots\cap \ol{L_{i_{k-1}}}$ of $\cX$, the set 
$\{q_{i_0},\ldots, q_{i_{g-1}}\}$.
Then $\varphi$ is an NFA isomorphism between $\cX$ and $\cN^{\rev\deter\rev}$. 
\end{proposition}
\begin{proof}
Every initial state ${\mathbf X}_i$ of $\cX$ is mapped to a subset $Q_i$ of $Q$, 
corresponding to the set of uncomplemented $L_j$'s in $X_i$, 
having a property $Q_i\cap I\ne\emp$. 
Then $Q_i$ is a final state of $\cN^{\rev\deter}$ and therefore, 
an initial state of $\cN^{\rev\deter\rev}$.

The final state ${\mathbf X}_{h -1}$ of $\cX$ is mapped to 
the set of all $q_j$'s such that $\eps\in L_j$, that is, 
the set of final states of $\cN$, 
which is the initial state of $\cN^{\rev\deter}$, 
and thus the final state of $\cN^{\rev\deter\rev}$.

We also have to demonstrate that ${\mathbf X}_j\in\eta({\mathbf X}_i,a)$ if and 
only if $\varphi({\mathbf X}_j)\in\gamma^\rev(\varphi({\mathbf X}_i),a)$
for all ${\mathbf X}_i, {\mathbf X}_j\in {\mathbf X}$ 
and $a\in\Sig$.

Let $s\in S$ be a state of $\cN^{\rev\deter}$.
The left language of state $s$ consists of all words $u$ such that 
$u\in L_{F,q}(\cN^\rev)$ for every $q\in s$, 
but $u\not\in L_{F,q'}(\cN^\rev)$ for any $q'\not\in s$.
We get that, $L_{F,s}(\cN^{\rev\deter})=
(\bigcap_{q\in s} L_{F,q}(\cN^\rev)) \setminus 
(\bigcup_{q'\not\in s} L_{F,q'}(\cN^\rev))=
(\bigcap_{q\in s} L_{F,q}(\cN^\rev)) \cap 
(\bigcap_{q'\not\in s} \ol{L_{F,q'}(\cN^\rev)})$, and so
$L_{s,F}(\cN^{\rev\deter\rev})=
(\bigcap_{q\in s} L_{q,F}(\cN)) \cap 
(\bigcap_{q'\not\in s} \ol{L_{q',F}(\cN)})$.
Also, given a state $t$ of $\cN^{\rev\deter}$ (as well as $\cN^{\rev\deter\rev}$), 
similar equations hold for $t$. 
Then, $\gamma(s,a)=t$ for some $a\in\Sig$ if and only if
$L_{F,s}(\cN^{\rev\deter})a\subseteq L_{F,t}(\cN^{\rev\deter})$.
This is equivalent to having $s\in\gamma^\rev(t,a)$ if and only if
$aL_{s,F}(\cN^{\rev\deter\rev})\subseteq L_{t,F}(\cN^{\rev\deter\rev})$.
Considering above, the latter is equivalent to
$a(\bigcap_{q\in s} L_{q,F}(\cN) \cap \bigcap_{q'\not\in s} \ol{L_{q',F}(\cN)})
\subseteq \bigcap_{q\in t} L_{q,F}(\cN) \cap \bigcap_{q'\not\in t} \ol{L_{q',F}(\cN)}$.

Let $s=\{q_{j_0},\ldots, q_{j_{e-1}}\}$ and $t=\{q_{i_0},\ldots, q_{i_{g-1}}\}$.
Then we have that 
$\{q_{j_0},\ldots,q_{j_{e-1}}\}\in\gamma^\rev(\{q_{i_0},\ldots, q_{i_{g-1}}\},a)$ 
if and only if
$a(L_{j_0}\cap\cdots\cap L_{j_{e-1}}\cap\ol{L_{j_{e}}}\cap\cdots\cap \ol{L_{j_{k-1}}}) 
\subseteq L_{i_0}\cap\cdots\cap L_{i_{g-1}}\cap\ol{L_{i_{g}}}\cap\cdots\cap \ol{L_{i_{k-1}}}$.
By denoting $X_i=L_{i_0}\cap\cdots\cap L_{i_{g-1}}\cap\ol{L_{i_{g}}}\cap\cdots\cap \ol{L_{i_{k-1}}}$ 
and $X_j=L_{j_0}\cap\cdots\cap L_{j_{e-1}}\cap\ol{L_{j_{e}}}\cap\cdots\cap \ol{L_{j_{k-1}}}$, 
we get that $\varphi({\mathbf X}_j)\in\gamma^\rev(\varphi({\mathbf X}_i),a)$ 
if and only if $aX_j \subseteq X_i$.
According to the definition of $\cX$, the latter is equivalent to 
${\mathbf X}_j \in \eta({\mathbf X}_i, a)$.
\end{proof}

\begin{corollary}
\label{cor:nfa_isomorphism}
The mapping $\varphi$ is a DFA isomorphism between $\cX^\rev$ and 
$\cN^{\rev\deter}$.
\end{corollary}

\section{Quotients, Atoms and \'Atomata}
\label{sec:atoms}

\subsection{Background}
Sengoku~\cite{Sen92} studied the NFA $\cN^{RDMR}$ obtained from any NFA $\cN$ by reversal, determinization, minimization and  reversal.  He called $\cN^{RDMR}$ the \emph{normal} NFA equivalent to $\cN$. He defined an NFA $\cN$ to be in \emph{standard form}
if $\cN^{RD}$ is minimal. 

Recall a (slightly modified version of a) theorem from~\cite{Brz63}:
\begin{theorem}
\label{thm:Brz}
If  an NFA $\cN$ has no empty states and $\cN^\rev$ is deterministic, 
then $\cN^\deter$ is minimal.
\end{theorem}

Suppose instead of starting with an NFA, we start with a minimal DFA $\cD$. 
Since $\cD$ has no unreachable states, $\cD^R$ has no empty states, and 
so Theorem~\ref{thm:Brz} applies to $\cD^R$. Thus 
$\cD^{RD}$ is minimal, that is, $\cD^{RD}=\cD^{RDM}$, and $\cD^{RDMR}=\cD^{RDR}$.
Since $\cD^{RD}$ is minimal, $\cD^{RDR}$ is in standard form.
The NFA $\cD^{RDR}$ also appeared in the work of Matz and Potthoff~\cite{MaPo95}.

As in the case of disjoint NFAs discussed in the previous section, we introduce a completely different definition of an NFA (which we call an \'atomaton) defined by a given minimal DFA---or equivalently, by any regular language---and prove that that NFA is isomorphic to $\cD^{RDR}$.

The concepts used here are special cases of those of Section~\ref{sec:partial}: 
here, instead of using an arbitrary NFA, we start with the minimal DFA of a regular language $L$.

\subsection{\'Atomata}

Let $\cD=(Q, \Sig, \delta, q_0,F)$ be the minimal DFA of $L$, with
state set $Q=\{q_0,\ldots,q_{n-1}\}$.
It is well known that the right language of every state $q_i$ of $\cD$ 
is a quotient $K_i=L_{q_i,F}(\cD)$ of $L$, $i\in\{0,\ldots,n-1\}$.

An \emph{atom} of $L$ is any non-empty language of the form 
$\widetilde{K_0}\cap\widetilde{K_1}\cap \cdots \cap \widetilde{K_{n-1}}$, 
where $\widetilde{K_i}$ is either $K_i$ or $\ol{K_i}$. 
Let the set of atoms be $A=\{A_0,\ldots,A_{m-1}\}$. 
Thus atoms of $L$ define a partition of $\Sig^*$, and 
$L$ has at most $2^n$ atoms.

An atom is \emph{initial} if it has $K_0$ (rather than $\ol{K_0}$) as a term;
it is \emph{final} if and only if it contains $\eps$.
Since $L$ is non-empty, it has at least one quotient containing $\eps$. 
Hence it has exactly one final atom, the atom 
$\widehat{K_0}\cap\widehat{K_1}\cap \cdots \cap \widehat{K_{n-1}}$, where 
$\widehat{K_i}=K_i$ if $\eps\in K_i$, and $\widehat{K_i}=\ol{K_i}$ otherwise.

If the intersection 
$\ol{K_0}\cap \cdots \cap \ol{K_{n-1}}$ is non-empty, then we call it 
the \emph{negative} atom; all the other atoms are \emph{positive}. 
Let the number of positive atoms be $p$; this number is either $m$ or $m-1$.
By convention, $I_A$ is the set of initial atoms, $A_{p-1}$ is the final atom, 
and the negative atom, if present, is $A_{m-1}$.
The negative atom can never be final, 
since there must be at least one complemented final quotient in its intersection.

Evidently, the set of partial atoms of the quotient DFA $\cD$ of the language $L$ is 
the set of atoms of $L$.
Since atoms of $L$ are a special case of partial atoms, all the results of 
Section~\ref{sec:partial} about partial atoms hold for atoms.

\vskip1em

Let $\cN=(Q, \Sig, \delta, I,F)$ be an NFA accepting $L$, with 
partial quotients $L_0,\ldots,L_{k-1}$,
partial atoms $X=\{X_0,\ldots,X_{\ell -1}\}$, and 
partial \'atomaton 
$\cX=({\mathbf X},\Sig,\eta, {\mathbf I_X},\{{\mathbf X}_{h -1}\})$.

\begin{proposition}
\label{prop:atomsubset}
For every $X_i$, where $i=0,\ldots,\ell -1$, there exists some atom $A_j$, 
$j\in \{0,\ldots,m-1\}$, such that $X_i\subseteq A_j$. 
\end{proposition}
\begin{proof}
Let $X_i=L_{i_0}\cap\cdots\cap L_{i_{g-1}}\cap\ol{L_{i_{g}}}
\cap\cdots\cap \ol{L_{i_{k-1}}}$, where $0\le g\le k$.
Let $A_j=K_{j_0}\cap\cdots\cap K_{j_{e-1}}\cap\ol{K_{j_{e}}}
\cap\cdots\cap \ol{K_{j_{n-1}}}$ be an atom that has a quotient $K_l$ 
uncomplemented if and only if there is some $L_r\in\{L_{i_0},\ldots, L_{i_{g-1}}\}$ 
such that $L_r\subseteq K_l$, and all the other quotients complemented.
We claim that $X_i\subseteq A_j$.
On the one hand, from the choice of atom $A_j$, it is clear that 
$L_{i_0}\cap\cdots\cap L_{i_{g-1}}\subseteq K_{j_0}\cap\cdots\cap K_{j_{e-1}}$.
On the other hand, it has to be the case that every quotient $K_l$ that is 
complemented in $A_j$, is a union of some $L_h$'s from the set 
$\{L_{i_g},\ldots, L_{i_{k-1}}\}$, or otherwise $K_l$ would be included
as an uncomplemented quotient.
Therefore, $K_{j_e}\cup\cdots\cup K_{j_{n-1}}\subseteq L_{i_g}\cup\cdots\cup L_{i_{k-1}}$,
implying
$\ol{L_{i_{g}}}\cap\cdots\cap\ol{L_{i_{k-1}}}\subseteq\ol{K_{j_e}}\cap\cdots\cap \ol{K_{j_{n-1}}}$.
It follows that $L_{i_0}\cap\cdots\cap L_{i_{g-1}}\cap\ol{L_{i_{g}}}
\cap\cdots\cap \ol{L_{i_{k-1}}}\subseteq
K_{j_0}\cap\cdots\cap K_{j_{e-1}}\cap\ol{K_{j_{e}}}\cap\cdots\cap \ol{K_{j_{n-1}}}$.
Thus, $X_i\subseteq A_j$.
\end{proof}

\begin{proposition}
\label{prop:union}
Every atom $A_j$ is a disjoint union of some $X_i$s.
\end{proposition}
\begin{proof}
The set $X$, as well 
the set $A$ of atoms is a partition of $\Sig^*$.
By Proposition~\ref{prop:atomsubset}, every $X_i$ is a subset of
some atom $A_j$; hence we conclude that every atom is a disjoint union of 
some $X_i$s.  
\end{proof}

We define the \'atomaton of $L$ as a special case of a partial \'atomaton
that uses a one-one correspondence 
$A_i \leftrightarrow  {\mathbf A}_i$ between atoms $A_i$ and the states 
${\mathbf A}_i$ of the NFA $\cA$ as follows:

\begin{definition}
\label{def:atomaton}
The \emph{\'atomaton} of $L$
 is the NFA $\cA=({\mathbf A},\Sig,\alpha, {\mathbf I}_A,\{{\mathbf A}_{p-1}\}),$
 where ${\mathbf A}=\{{\mathbf A}_i\mid A_i\in A\}$,
 ${\mathbf I}_A=\{{\mathbf A}_i\mid A_i\in I_A\}$, 
 and ${\mathbf A}_j \in \alpha({\mathbf A}_i, a)$ if and only if 
$aA_j \subseteq A_i$, for all ${\mathbf A_i},{\mathbf A_j}\in {\mathbf A}$ and 
$a\in\Sig$.
\end{definition}

\begin{proposition}
\label{prop:partial_atomaton}
Suppose $\cN$ is an NFA accepting $L$ and $\cX$ is its partial \'atomaton; 
then $\cX$  is the \'atomaton of $L$ if and only if $X$ is the set of atoms.
\end{proposition}
\begin{proof}
If $\cX$ is the \'atomaton of $L$, then the set $X$ must be the set of atoms.

Conversely, let $X$ be the set of atoms. 
We show that in this case, $I_X$ is the set of initial atoms, and 
$X_{h-1}$ is the final atom.
By definition of $I_X$, $X_i\in I_X$ if and only if $X_i$ has some term 
$L_j$ such that $L_j=L_{q_j,F}(\cN)$ for some initial state $q_j$ of $\cN$. 
So if $X_i\in I_X$, then there is some $q_j\in I$
such that $X_i\subseteq L_j\subseteq K_0$ holds, implying 
that $X_i$ is an initial atom.
On the other hand, if $X_i$ is an initial atom, then it must have some term
$L_j$ such that $X_i\subseteq L_j\subseteq K_0$ and $q_j\in I$, 
implying $X_i\in I_X$.

Also, since $\eps\in X_{h -1}$, $X_{h -1}$ is the final atom. 
One can verify now that the partial \'atomaton $\cX$ of 
Definition~\ref{def:partial_atomaton} is the \'atomaton of $L$. 
\end{proof}

We illustrate the computation of the \'atomaton using quotient equations.

\begin{example}
\label{ex:atoms1}
Consider the language $L=a\Sig^*$ of Example~\ref{ex:partial}.
It is defined by the following quotient equations:
\begin{displaymath}
 \begin{array}{rl}
   K_0 &= aK_1 \cup b K_2=L, \\
   K_1 &= aK_1\cup bK_1 \cup \eps=\Sig^*, \\
   K_2 &= aK_2\cup bK_2=\emp. 
 \end{array}  
\end{displaymath}  
We find the atoms using these quotient equations in the same way 
as we found partial atoms from the equations for partial quotients. 
Note that all intersections having $K_2$ as a term are empty, 
as are those containing $\ol{K_1}$.
Hence there are only two atoms: 
$A_0=K_0\cap K_1\cap\ol{K_2}=K_0=L$, and
$A_1=\ol{K_0}\cap K_1\cap\ol{K_2}=\ol{K_0}=\ol{L}$. 
Thus we find the atom equations
\begin{displaymath}
 \begin{array}{rl}
  A_0
 &= a[(K_0\cap K_1\cap \ol{K_2}) \cup (\ol{K_0}\cap K_1\cap \ol{K_2})],\\
  A_1&= 
  b[(K_0\cap K_1\cap \ol{K_2}) \cup (\ol{K_0}\cap K_1\cap \ol{K_2})]\cup \eps,
 \end{array}
\end{displaymath}
where $A_0=L=a\Sig^*$ and $A_1=\ol{L}=b\Sig^*\cup \eps$.
By Proposition~\ref{prop:union}, every atom is a union of partial atoms. 
Indeed one verifies that $A_0=X_0\cup X_1$ and  $A_1=X_2\cup X_3$, 
where the $X_i$ are defined in Example~\ref{ex:partial}.
\qedb
\end{example}

We now relate a number of concepts associated with regular languages:

\begin{theorem}
\label{thm:atomaton}
Let $L$ be a regular language, let $\cD$ be its minimal DFA, 
and let $\cA$ be its \'atomaton. Then the following statements hold:\\
1. $\cA$ is isomorphic to $\cD^{\rev\deter\rev}$.\\
2. The reverse $\cA^\rev$ of $\cA$ is the minimal DFA of $L^R$.\\
3. The DFA $\cA^\deter$ is the minimal DFA of $L$.\\
4. For any NFA $\cN$ accepting $L$, $\cN^{\rev\deter\mini\rev}$ is isomorphic to  $\cA$.\\
5. $\cA$ is isomorphic to $\cD$ if and only if $L$ is bideterministic.\\
\end{theorem}
\begin{proof}
Claim 1 follows by Propositions~\ref{prop:isomorphism} and~\ref{prop:partial_atomaton}.
Claim 2 follows from Claim 1 and  Theorem~\ref{thm:Brz}.
Since $\cA^R$ is deterministic and minimal, it has no unreachable states. 
Hence $\cA$ has no empty states and Theorem~\ref{thm:Brz} applies. 
Therefore $\cA^D$ is the minimal DFA accepting $L$, and Claim 3 follows. 
Claim 4 holds because $\cN^{RDM}$ is the minimal DFA of $L^R$.

To prove Claim 5, first suppose that $\cA$ is isomorphic to $\cD$.
DFA $\cD$ must be trim, because all states of \'atomaton $\cA$ are non-empty.  
Since  $\cA$ is isomorphic to $\cD$, $\cA$ itself is a trim DFA. 
By Claim~2, $\cA^\rev$ is a DFA. 
Hence $\cA$, and so also $L$, are bideterministic.

Conversely, let $\cB$ be a bideterministic DFA accepting $L$. 
Since $\cB$ is a trim DFA,  $\cB^{\rev\deter}$ is minimal by Theorem~\ref{thm:Brz}.
Since $\cB^\rev$ is deterministic, we get $\cB^{\rev\deter}=\cB^{\rev}$.
Thus $\cB^{\rev\deter\mini\rev}=\cB^{\rev\deter\rev}=\cB^{\rev\rev}=\cB$
is isomorphic to $\cA$ by Claim~4.
On the other hand, 
since $\cB^{\rev}$ is deterministic, $\cB^{\deter}=\cB$ is minimal 
by Theorem~\ref{thm:Brz}. Hence $\cB$ is isomorphic to $\cD$.
Since $\cB$ is isomorphic both to $\cA$ and $\cD$, 
$\cA$ is isomorphic to $\cD$.
\end{proof}

An NFA $\cD^{\rev\deter\rev\trim}$ isomorphic to the trim \'atomaton $\cA^{\trim}$
is considered in~\cite{MaPo95}. It is noted there that for each word $w$ in $L$ 
there is a unique path in $\cD^{\rev\deter\rev\trim}$ accepting $w$, 
and deleting any transition from $\cD^{\rev\deter\rev\trim}$ results 
in a smaller accepted language. 
It is also stated in~\cite{MaPo95} without proof that the
right language $L_{q,F}(\cN)$ of any state $q$ of an NFA $\cN$ accepting $L$ is 
a subset of a union of atoms. This holds because $L_{q,F}(\cN)$ is a subset 
of some quotient of $L$, and quotients are unions of atoms by 
Proposition~\ref{prop:quotient_partial}, Part 2.

Theorem~\ref{thm:atomaton} provides another method of finding the \'atomaton 
of $L$: simply reverse the quotient DFA of $L^R$.

\vskip1em
 
To end this section, we explain the differences between our present  
definition of an atom and that of~\cite{BrTa11}.
The definition  in \cite{BrTa11} did not consider 
the intersection of all the complemented quotients to be an atom, and
so all atoms were positive.
It was shown in~\cite{BrTa11} that the reverse of the \'atomaton
with only positive atoms is the trim version of the minimal DFA of $L^R$. 
With the negative atom, we avoid the trimming operation;
so the reverse of the \'atomaton is the minimal DFA of $L^R$. 
Also, with the negative atom, a language $L$ and its complement language $\ol{L}$ 
have the same atoms. Finally, we have symmetry between the atoms with 
0 and $n$ complemented quotients, and the same upper bounds on quotient 
complexity for both, as was shown in \cite{BrTa13}.

\section{Atomic NFAs}
\label{sec:atomic}

\subsection{Basic Properties}
We now introduce a new class of NFAs and study their properties.

\begin{definition}
\label{def:atomic2}
An NFA $\cN=(Q,\Sigma,\delta,I,F)$ is \emph{atomic} if for every state 
$q\in Q$, the right language $L_{q,F}(\cN)$ of $q$ is a union of some atoms of $L(\cN)$. 
\end{definition}
Note that, if $L_{q,F}(\cN)=\emp$, then it is the union of zero atoms.

Recall that an NFA $\cN$ is residual, if $L_{q,F}(\cN)$ is a (left) quotient of 
$L(\cN)$ for every $q\in Q$.
Since every quotient is a union of atoms (see Proposition~\ref{prop:quotient_partial}, 
Part~2), every residual NFA is atomic. 
However, the converse is not true: there exist atomic NFAs which are not residual.
For example, the \'atomaton of a language $L$ is atomic, but not necessarily residual, because
in a general case, atoms are different from quotients. 
Note also that every DFA with only reachable states is atomic because
the right language of every state of such DFA is some quotient. 

Let us now consider the universal automaton $\cU_L=(Q,\Sigma,\delta,I,F)$ of
a language $L$. We state some basic properties of this automaton from~\cite{LoSa07}.
Let $(X,Y)$ be a factorization of $L$.
Then \\
\hglue 10pt (1)
$Y=\bigcap_{x\in X} x^{-1}L$ 
and $X=\bigcap_{y\in Y} Ly^{-1}$.\\
\hglue 10pt (2)
$L_{I,(X,Y)}(\cU_L)=X$ and 
$L_{(X,Y),F}(\cU_L)=Y$.\\
\hglue 10pt (3) The universal automaton $\cU_L$ accepts $L$.

\begin{theorem}
\label{thm:univ_atomic}
Let $L$ be any regular language. The following automata accepting $L$ 
are atomic:
\newline
1. The \'atomaton $\cA$.
\newline
2. Any DFA with no unreachable states.
\newline
3. Any residual NFA. 
\newline
4. The universal automaton $\cU_L$.
\end{theorem}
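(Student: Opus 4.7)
The plan is to prove each of the four statements in turn, reducing all of them to Proposition~\ref{prop:quotient} Part~2, which says every left quotient of $L$ is a union of atoms. Part~1 is immediate from Proposition~\ref{prop:right_lang}: the right language of state $\mathbf{A}_i$ of $\cA$ is the single atom $A_i$, which is trivially a union of atoms. Part~3 follows from the definition of a residual NFA, whose right languages are, by assumption, left quotients of $L$ and hence unions of atoms by Proposition~\ref{prop:quotient} Part~2. For Part~2, in any DFA $\cD$ accepting $L$ the right language of a reachable state $q$ equals $w^{-1}L$ for any $w$ with $\delta(q_0,w)=q$; thus every DFA is a residual NFA, and Part~2 reduces to Part~3 (unreachable states can be deleted without changing $L(\cD)$).

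The substantive work is Part~4. By the recalled properties of $\cU_L$, the right language of a state $(X,Y)$ is $L_{(X,Y),F}(\cU_L)=Y=\bigcap_{x\in X} x^{-1}L$, an intersection of left quotients of $L$. I will prove the lemma: \emph{any non-vacuous intersection of left quotients of $L$ is a union of atoms}. Given $w$ in such an intersection, $w$ lies in at least one quotient $L_k$, so $w\notin \ol{L_1}\cap\cdots\cap\ol{L_n}$, and therefore $w$ belongs to some atom $A_j$. To show $A_j$ is contained in the whole intersection, fix any quotient $L_h$ appearing in it; by the argument in the proof of Proposition~\ref{prop:quotient} Part~2, $L_h$ equals the union of those atoms whose defining intersection has $L_h$ uncomplemented as a term. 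Since atoms are pairwise disjoint and $w\in A_j\cap L_h$, the atom $A_j$ must be one of those atoms, so $A_j\subseteq L_h$. Taking the intersection over all such $L_h$ yields $A_j\subseteq Y$, proving $Y$ is a union of atoms.

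The main obstacle is the corner case $X=\emp$, where the vacuous intersection gives $Y=\Sig^*$, which need not be a union of atoms (the region $\ol{L_1}\cap\cdots\cap\ol{L_n}$ can be non-empty, as for $L=b^*$ over $\{a,b\}$). I would address this by observing that every initial state of $\cU_L$ has $\eps\in X$, hence $X\neq\emp$, and that the transition condition $Xa\subseteq X'$ forces $X'\neq\emp$ whenever $X\neq\emp$. Consequently, no state with $X=\emp$ is reachable from the initial states, and the atomic property holds on the accessible part of $\cU_L$, which is the convention in~\cite{LoSa07}.
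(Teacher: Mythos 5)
Your proof is correct and, for Parts 1--3, matches the paper's argument essentially verbatim: atoms for the \'atomaton, quotients (hence unions of atoms by Proposition~\ref{prop:quotient}, Part~2) for DFA's and residual NFA's. For Part~4 the two arguments establish the same key fact---a non-vacuous intersection $\bigcap_{i\in H}L_i$ of quotients is a union of atoms---but by different means: the paper expands the intersection algebraically, inserting $L_j\cup\ol{L_j}$ for every missing index $j$ and distributing, whereas you argue element-wise, locating the atom of each word in the intersection and showing that atom is contained in every $L_h$ with $h\in H$. Both are sound when $H\neq\emp$. The genuinely valuable part of your write-up is the corner case $H=\emp$, i.e.\ the factorization $(\emp,\Sig^*)$, which the paper's proof silently ignores: its expansion would then contain the term $\ol{L_1}\cap\cdots\cap\ol{L_n}$, which is by definition not an atom even when non-empty. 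This case does occur---for $L=b^*$ over $\{a,b\}$ the pair $(\emp,\Sig^*)$ is a factorization, its right language is $\Sig^*$, and the only atom is $b^*$---so Part~4 as literally stated, with Definition~\ref{def:atomic} quantifying over all states and $\cU_L$ built on all factorizations, needs exactly the repair you propose. Your fix (no such state is accessible, since initial states have $\eps\in X$ and the transition condition $Xa\subseteq X'$ propagates non-emptiness of the first component) is correct, but you should be explicit that it amends either the notion of atomic (restrict to accessible states) or the definition of $\cU_L$ (discard the inaccessible factorization); the paper's own proof does not cover this case at all. The same accessibility caveat applies, more mildly, to your reduction of Part~2 to Part~3: the right language of an unreachable DFA state need not be a quotient of $L$, a point the paper likewise glosses over.
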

\vspace{-.3cm}
\begin{proof} 
1. $\cA$ is atomic because the right language of every state of $\cA$ is an atom of $L$.\\
\hglue10pt 
2. The right language of every state of any DFA accepting $L$ 
that has no unreachable states, 
is a quotient of~$L$. Since every quotient is a union of atoms, every such DFA 
is atomic.\\
\hglue10pt 
3. The right language of every state of any residual NFA of $L$ 
is a quotient of~$L$, and hence a union of atoms. Thus, any residual NFA is 
atomic.\\
\hglue10pt 
4. We show that the right language of every state $(X,Y)$ of $\cU_L$ is a union 
of atoms of $L(\cU_L)=L$. Let $(X,Y)$ be any state of $\cU_L$.
Since by property (2) above, $L_{(X,Y),F}(\cU_L)=Y$ holds, it is enough to show 
that $Y$ is a union of atoms. 

By (1), $Y=\bigcap_{x\in X} x^{-1}L$. 
We note that if $Y=\emptyset$, then $Y$ is the union of zero atoms.
We also note that if $X=\emptyset$, then $Y=\Sig^*$, and so $Y$ is the union of 
all atoms.

Let $L_0,\ldots, L_{n-1}$ be the quotients of $L$. 
Then for some $H\subseteq [n]$, $Y=\bigcap_{i\in H} L_i$.
Now $\bigcap_{i\in H} L_i=(\bigcap_{i\in H} L_i)\cap
(\bigcap_{j\in [n]\setminus H} (L_j\cup\ol{L_j})=
\bigcup (\bigcap_{i\in H} L_i)\cap(\bigcap_{j\in[n]\setminus H}
\widetilde{L_j})$, where $\widetilde{L_j}$ is either $L_j$ or $\ol{L_j}$. 
Thus, $Y$ is a union of atoms of $L$. 
\end{proof}

\subsection{Atomicity of States and NFAs}
\label{sec:detecting}

Let $\cN=(Q, \Sig, \delta, I,F)$ be any NFA accepting $L$.
We call a state $q_i$ of $\cN$ \emph{atomic} if its right language
$L_i=L_{q_i,F}(\cN)$ is a union of atoms of $L$. 
We now present a method of detecting which states of an NFA
are atomic.

Consider the DFA $\cN^{\rev\deter}$ and the NFA $\cN^{\rev\deter\rev}$;
these two automata have the same set $S$ of states. 
By Proposition~\ref{prop:isomorphism}, there is an isomorphism $\varphi$
between the partial \'atomaton $\cX$ of $\cN$ and the NFA $\cN^{\rev\deter\rev}$. 
Since there is a one-one correspondence between states ${\mathbf X}_i$ of 
$\cX$ and partial atoms $X_i$ of $\cN$, we can also establish a one-one 
correspondence $\varphi'$ between partial atoms $X_i$ of $\cN$ and 
states $s_i$ of $\cN^{\rev\deter}$ as follows:

\begin{definition}
\label{def:mapping}
Let $\varphi' : X \to S$ be the mapping such that for every $X_i\in X$ and 
$s_i\in S$, $\varphi'(X_i)= s_i$ if and only if $\varphi({\mathbf X}_i)= s_i$. 
\end{definition}

DFA $\cN^{\rev\deter}$ is not necessarily minimal. 
Let $S_0,\ldots,S_{r-1}$ be the sets of equivalent states 
of $\cN^{\rev\deter}$; that is, every $S_j$ is an equivalence class of 
the states of $\cN^{\rev\deter}$. 
The following proposition holds:

\begin{proposition}
\label{prop:atom_equiv}
Let $X'\subseteq X$ be a set of partial atoms of $\cN$, and 
let $S'\subseteq S$ be the corresponding set of states of $\cN^{\rev\deter}$
according to mapping $\varphi'$.
An equality $\bigcup_{X_i\in X'} X_i = A_j$ holds for some atom $A_j$ if and only if
$S'$ is equal to some equivalence class $S_j$.  
\end{proposition}
\begin{proof}
Let $X'\subseteq X$ be a set of partial atoms of $\cN$, and 
let $S'=\{s_i\in S \mid \varphi'(X_i)= s_i, \text{ where } X_i\in X'\}$ be 
the corresponding set of states of $\cN^{\rev\deter}$.

Consider the minimal DFA $\cN^{\rev\deter\mini}$ of $L^R$.  
It is well known that this DFA can be obtained 
by ``merging'' the states of each set $S_j$ of the states of the DFA
$\cN^{\rev\deter}$, into a state $t_j$ of $\cN^{\rev\deter\mini}$.
Similarly, $\cN^{\rev\deter\mini\rev}$, the reverse NFA of the minimal DFA of $L^R$,
can be obtained by merging the corresponding states of $\cN^{\rev\deter\rev}$, 
or equivalently, its isomorphic partial \'atomaton $\cX$.
Since by Proposition~\ref{prop:right_lang_partial},
the right language of every state of $\cX$ is some partial atom, 
the right language of the state $t_j$ of $\cN^{\rev\deter\mini\rev}$ is
the union of partial atoms $X_i$ of $\cN$ such that their corresponding states
$s_i=\varphi'(X_i)$ belong to $S_j$. 

On the other hand, according to Theorem~\ref{thm:atomaton}, Part 4, 
the NFA $\cN^{\rev\deter\mini\rev}$ is isomorphic to the \'atomaton of $L$, and 
by Proposition~\ref{prop:right_lang_partial},
the right language of every state of the \'atomaton is some atom.
We conclude that the union of partial atoms $X_i$ such that 
$s_i=\varphi'(X_i)$ belong to $S_j$, is some atom $A_j$. 
Since partial atoms are disjoint, no other union of partial atoms
of $\cN$ can be equal to $A_j$. Thus, the claim of the proposition holds.
\end{proof}

We use the equivalence classes $S_0,\ldots,S_{r-1}$ of the states of 
the DFA $\cN^{\rev\deter}$ to detect which states of $\cN$ are atomic.

\begin{theorem}
\label{thm:atomic_state}
A state $q_i$ of an NFA $\cN$ is atomic if and only if 
the subset $S'_i=\{s_j\in S \mid q_i\in s_j\}$ of states of 
$\cN^{\rev\deter}$ is a union of some equivalence classes of $\cN^{\rev\deter}$.
\end{theorem}
\begin{proof}
Consider a state $q_i$ of an NFA $\cN$ with right language $L_i$.
Let $X'$ be the set of partial atoms $X_j$ of $\cN$ 
that have $L_i$ uncomplemented in the intersection representing $X_j$, and
let ${\mathbf X}'$ be the corresponding set of states of the partial \'atomaton 
$\cX$ of $\cN$.
Let $S'_i$ be the set of states of $\cN^{\rev\deter\rev}$ that are assigned to
the states in ${\mathbf X}'$ by the mapping $\varphi$ of 
Proposition~\ref{prop:isomorphism}. Clearly, $S'_i$ consists of
exactly those states $s_j$ of $\cN^{\rev\deter\rev}$ such that $q_i\in s_j$. 

Now suppose that $L_i$ is a union of atoms.
Since $X'$ is the set of partial atoms of $\cN$ with $L_i$ uncomplemented, 
$L_i$ is equal to the union of all partial atoms in $X'$. 
So the union of all partial atoms in $X'$ is a union of atoms. 
By Definition~\ref{def:mapping}, partial atoms in $X'$ are 
mapped by $\varphi'$ exactly to the states in $S'_i$. 
By Proposition~\ref{prop:atom_equiv},
$S'_i$ is a union of some equivalence classes of $\cN^{\rev\deter}$. 

Conversely, if $L_i$ is not a union of atoms, then the union of 
partial atoms in $X'$ is not a union of atoms either. 
Contrarily to the case above, the set $S'_i$ cannot be a union of 
any equivalence classes of $\cN^{\rev\deter}$. 
\end{proof}

\begin{example}
\label{ex:atomic_state}
Consider the NFA $\cN$ of Table~\ref{tab:N} and the DFA $\cN^{RD}$ of Table~\ref{tab:Nrd}. 
The equivalence classes of the states of $\cN^{RD}$ are 
$S_0=\{ \{1\},\{2\} \} $ and $S_1=\{ \{0,1\},\{0,2\} \}$. 
Since $0$ appears in both states of $S_1$ and does not appear in the states of $S_0$, 
state $0$ of $\cN$ is atomic.
However, $1$ appears in the set $\{ \{1\},\{0,1\} \} $, which is not a union of 
equivalence classes; hence state $1$ of $\cN$ is not atomic. Similarly, state $2$ 
is not atomic.
\qedb
\end{example}

The following result is a consequence of Theorem~\ref{thm:atomic_state}:
\begin{corollary}
\label{cor:atomic}
An NFA $\cN$ is atomic if and only if $\cN^{\rev\deter}$ is minimal. 
\end{corollary}
\begin{proof}
If $\cN^{\rev\deter}$ is minimal, the equivalence classes of its states  are singletons. 
So the set of states of $\cN^{\rev\deter}$  in which a state $q_i$ of $\cN$ appears is a union of equivalence classes.
By Theorem~\ref{thm:atomic_state},
every state is atomic, and  so is $\cN$.

Conversely, suppose $\cN$ is atomic, but $\cN^{RD}$ is not minimal.
Then there are two states $s_j$ and $s_j'$ of $\cN^{\rev\deter}$  which are equivalent.
Without loss of generality, suppose that $q_i\in s_j\setminus s_j'$; 
then the set of states in which $q_i$ appears cannot be a union of equivalence classes. 
By Theorem~\ref{thm:atomic_state} again, $q_i$ is not atomic, and neither is $\cN$.
\end{proof}

We also have the following corollary:

\begin{corollary}
\label{cor:atomic2}
An NFA $\cN$ is atomic if and only if the partial atoms of $\cN$ are the atoms of $L$. 
\end{corollary}
\begin{proof}
By Corollary~\ref{cor:atomic}, an NFA $\cN$ is atomic if and only if 
$\cN^{\rev\deter}$ is minimal. 
But, if $\cN^{\rev\deter}$ is minimal, then $\cN^{RD}=\cN^{RDM}$ and 
$\cN^{\rev\deter\rev}=\cN^{\rev\deter\mini\rev}$.
By Theorem~\ref{thm:atomaton}, Part 4, 
$\cN^{\rev\deter\rev}$ is isomorphic to the \'atomaton $\cA$ of $L$. 
Since by Proposition~\ref{prop:isomorphism}, the partial \'atomaton $\cX$ 
of $\cN$ is isomorphic to $\cN^{\rev\deter\rev}$, $\cX$ and $\cA$ 
are isomorphic. According to Proposition~\ref{prop:partial_atomaton},
this means that the partial atoms of $\cN$ are the atoms of $L$. 
\end{proof}

\begin{example}
\label{ex:atomicity}
All three possibilities for the atomic nature of $\cN$ and $\cN^\rev$ exist:
$\cN_{a}$ of Table~\ref{tab:Na} and its reverse are not atomic.
$\cN_{b}$ of Table~\ref{tab:Nb} is atomic, but its reverse is not.
$\cN_{c}$ of Table~\ref{tab:Nc} and its reverse are both atomic.
Note that all three of these NFAs accept $\Sig^*ab\Sig^*$, where $\Sig=\{a,b\}$.
\qedb
\begin{table}[b]
\begin{minipage}[b]{0.3\linewidth}
\caption{$\cN_a$.}
\label{tab:Na}
{\footnotesize
\begin{center}
$
\begin{array}{|c| c||c| c| }    
\hline
& \ \  \ \ 
&\ \ a \ \ &\ \ b \ \   \\
\hline  
\rightarrow & 0
& \ \{0,1\} \ & \ \{0\} \   \\
\hline  
& 1
&  & \{2\}    \\
\hline  
\leftarrow& 2
&  \{2\} &  \{2\}   \\
\hline  
\end{array}
$
\end{center}}
\end{minipage}
\hspace{0.3cm}
\begin{minipage}[b]{0.3\linewidth}
\caption{$\cN_b$.}
\label{tab:Nb}
{\footnotesize
\begin{center}
$
\begin{array}{|c| c||c| c| }    
\hline
& \ \  \ \ 
&\ \ a \ \ &\ \ b \ \ \\
\hline  
\rightarrow & 0
& \ \{1\} \ & \ \{0\} \ \\
\hline  
 & 1
&  \{1 \} & \  \{1,2 \} \ \\
\hline  
\leftarrow & \ 2 \
 & \ \{1,2\} \ & \ \{ 0\} \ \\
\hline  
\end{array}
$
\end{center}}
\end{minipage}
\hspace{0.55cm}
\begin{minipage}[b]{0.3\linewidth}
\caption{$\cN_c$.}
\label{tab:Nc}
{\footnotesize
\begin{center}
$
\begin{array}{|c| c||c| c| }    
\hline
& \ \  \ \ 
&\ \ a \ \ &\ \ b \ \ \\
\hline  
\rightarrow & 0
& \ \{1\} \ & \ \{0\} \ \\
\hline  
 & 1
&  \{1\} & \{1,2\} \\
\hline  
\leftarrow & \ 2 \
 & \ \{2\} \ & \  \ \\
\hline  
\end{array}
$
\end{center}}
\end{minipage}

\end{table}

\end{example}

\subsection{Extension of Brzozowski's Theorem on DFA Minimization}
\label{sec:extension}

Theorem~\ref{thm:Brz} is the basis for Brzozowski's ``double-reversal'' 
minimization algorithm~\cite{Brz63}:
Given any DFA $\cD$, reverse it to get $\cD^{\rev}$, determinize 
$\cD^{\rev}$ to get $\cD^{\rev\deter}$, reverse $\cD^{\rev\deter}$ to get 
$\cD^{\rev\deter\rev}$, and then determinize $\cD^{\rev\deter\rev}$ to get 
$\cD^{\rev\deter\rev\deter}$. This last DFA is guaranteed to be minimal 
by Theorem~\ref{thm:Brz}, since $\cD^{\rev\deter}$ is deterministic and
$\cD^{\rev\deter\rev}$ has no empty states.
Hence $\cD^{\rev\deter\rev\deter}$ is the minimal DFA equivalent to $\cD$.

Since this conceptually very simple algorithm carries out two 
determinizations, its complexity is exponential in the number of states 
of the original automaton in the worst case. But its 
performance is good in practice, often better than Hopcroft's 
algorithm \cite{TaV05,Wat95}.
Furthermore, this algorithm applied to an NFA still yields an equivalent 
minimal DFA; see~\cite{Wat95}, for example.

As a consequence  of Corollary~\ref{cor:atomic}, we can now generalize 
Theorem~\ref{thm:Brz}:

\begin{theorem}
\label{thm:extension}
For any NFA $\cN$,  $\cN^\deter$ is minimal if and only if $\cN^\rev$ is atomic. 
\end{theorem}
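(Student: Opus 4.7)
The plan is to show that both ``$\cN^\deter$ is minimal'' and ``$\cN^\rev$ is atomic'' are equivalent to a single condition $(\star)$: for all $u,v \in \Sig^*$, $u^{-1}L = v^{-1}L$ implies $\delta_\cN(I,u) = \delta_\cN(I,v)$, where $L = L(\cN)$. Write $\phi(u) := \delta_\cN(I,u)$ and $u \sim_L v$ for $u^{-1}L = v^{-1}L$. The minimality half is nearly immediate: the reachable states of $\cN^\deter$ are exactly the subsets $\phi(u)$, and because $\cN^\deter$ is deterministic and accepts $L$, the right language of $\phi(u)$ is $u^{-1}L$. Thus $\phi(u) = \phi(v)$ always forces $u \sim_L v$, and $\cN^\deter$ is minimal---distinct reachable subsets having distinct right languages---iff the converse $(\star)$ also holds.

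For the atomic half, the right language of state $q$ in $\cN^\rev$ is $(L_{I,q}(\cN))^R$, so $\cN^\rev$ is atomic iff every $L_{I,q}(\cN)$ is a union of \emph{reverses} of atoms of $L^R$. The key computation identifies those reverses. Since $(w^{-1}L^R)^R = L(w^R)^{-1}$, reversing the intersection defining an atom of $L^R$ yields a non-empty intersection of complemented and uncomplemented right quotients of $L$; unfolding the definitions, two words $y, y'$ lie in the same atom of $L^R$ exactly when $y^R \sim_L (y')^R$, and the one intersection excluded from the atoms by the ``at least one uncomplemented'' clause reverses to the dead class $D = \{u : u^{-1}L = \emp\}$. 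Hence the reverses of atoms of $L^R$ are precisely the $\sim_L$-classes other than $D$.

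Trimness of $\cN$ enters at exactly one point: no $L_{I,q}(\cN)$ can contain a word of $D$, for otherwise $L_{q,F}(\cN) \subseteq u^{-1}L = \emp$ would contradict trimness. Consequently, ``$L_{I,q}(\cN)$ is a union of non-$D$ $\sim_L$-classes'' is equivalent to ``$L_{I,q}(\cN)$ is $\sim_L$-closed,'' and demanding this for every $q$ is exactly $(\star)$. I expect the main obstacle to be the reverse-of-atom computation above: one must reverse a nested intersection of complements, match it against the right-quotient analogue of the atom definition, and verify that the ``not all complemented'' clause corresponds precisely to excluding the dead class, which is where trimness becomes indispensable.
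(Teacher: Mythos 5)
Your proposal is correct, and it takes a genuinely different route from the paper's proof. The paper argues both directions by contradiction, invoking Theorem~\ref{thm:atomaton}, Part~4 (the isomorphism between $\cN^{\deter\mini\trim\rev}$ and the \'atomaton of $L(\cN^\rev)$) and tracking pairs of subset-states $s',s''$ of $\cN^\deter$; in the converse direction it additionally uses the disjointness of atoms to rule out a proper inclusion $L_{I,s'}(\cN^\deter)\subset B_i^R$. You instead factor both sides through the single condition $(\star)$ that the reachability map $u\mapsto\delta(I,u)$ is constant on Nerode classes of $L$. The half ``$\cN^\deter$ minimal $\Leftrightarrow(\star)$'' is indeed immediate once one notes that the right language of the subset $\delta(I,u)$ is $u^{-1}L$. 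The half ``$\cN^\rev$ atomic $\Leftrightarrow(\star)$'' rests on your identification of the reverses of the atoms of $L^R$ with the non-dead $\sim_L$-classes of $L$; I checked this computation and it is correct ($y,y'$ lie in the same cell of the partition generated by the quotients of $L^R$ iff $y^R\sim_L (y')^R$, and the all-complemented cell reverses to $D$), and it is in substance a direct re-derivation of Theorem~\ref{thm:atomaton}, Parts~1--2, i.e.\ that $\cA^\rev$ is the minimal IDFA. Your accounting of trimness is also exactly right: it is needed only to exclude dead words from the left languages $L_{I,q}(\cN)$, hence only for the direction ``minimal $\Rightarrow$ atomic'' (a non-trim $\cN$ with a state reachable only by dead words gives a genuine counterexample there). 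What your approach buys is a self-contained, non-contradiction argument with a reusable intrinsic characterization $(\star)$; what the paper's buys is economy given that Theorem~\ref{thm:atomaton} has already been established. The only caveat is that your write-up is a plan: the reverse-of-atom computation is asserted rather than carried out, but every assertion in it is correct and routine to complete.
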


\begin{corollary}
If $\cD$ is a non-minimal  DFA, then $\cD^\rev$ is not atomic. 
\end{corollary}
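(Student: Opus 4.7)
The plan is to deduce this corollary directly from Theorem~\ref{thm:extension}: the contrapositive of that theorem says that, for a trim NFA $\cN$, if $\cN^\deter$ is non-minimal, then $\cN^\rev$ is not atomic. Thus it suffices to show that for a non-minimal DFA $\cN$ (taken trim, as is implicit whenever one speaks of minimality in this context), the determinization $\cN^\deter$ is itself non-minimal.

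The only substantive observation is that applying the subset construction to a trim complete DFA is vacuous up to relabelling: starting from $\{q_0\}$, every reachable subset is a singleton $\{q\}$ with $q$ a reachable state of $\cN$, and the transitions among these singletons mirror those of $\cN$ exactly. Hence $\cN^\deter \cong \cN$. In particular, whenever $\cN$ is non-minimal---i.e., contains two distinct states with identical right languages---$\cN^\deter$ inherits the same pair of equivalent states and is thus non-minimal. Applying Theorem~\ref{thm:extension} contrapositively then gives that $\cN^\rev$ is not atomic.

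I foresee no real obstacle; the corollary is essentially a direct specialization of Theorem~\ref{thm:extension} to the case of DFAs. The only point to watch is the trimness hypothesis of Theorem~\ref{thm:extension}, which is automatic once one adopts the natural convention that a DFA in the context of the corollary is assumed accessible (trim). If one wishes to allow unreachable states, one first passes to the reachable sub-DFA of $\cN$, noting that this reduction alters neither $\cN^\deter$ (up to isomorphism) nor the atomicity status of $\cN^\rev$, since unreachable states of $\cN$ translate into states of $\cN^\rev$ with empty right language---a trivial (empty) union of atoms.
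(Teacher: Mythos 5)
Your overall route---specializing Theorem~\ref{thm:extension} via the observation that $\cN^\deter$ is isomorphic to (the reachable part of) $\cN$---is the intended one, and for a \emph{trim} DFA the argument is complete. The gap is in your handling of the trimness hypothesis: you equate ``accessible'' with ``trim'', but the paper's trimming operation also removes states that do not lead to any final state. A complete DFA of a language with an empty quotient---the generic case---has a reachable dead state and is therefore accessible but \emph{not} trim, so Theorem~\ref{thm:extension} does not apply to it as stated. Your proposed reduction does not repair this: passing to the trim part can destroy the hypothesis of the corollary (a DFA whose only defect is having two dead states is non-minimal, yet its trim part determinizes to a minimal DFA), so the contrapositive of Theorem~\ref{thm:extension} yields nothing there. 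For the same reason your remark about unreachable states is not quite right either: deleting an unreachable state that duplicates a reachable one preserves $\cN^\deter$ and the atomicity of $\cN^\rev$, but not non-minimality---which is why some accessibility convention really is needed for the corollary to hold at all.

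The missing case is easy to close directly, without Theorem~\ref{thm:extension}. If $d$ is a reachable dead state of $\cN$, its right language in $\cN^\rev$ is $(L_{I,d}(\cN))^R$, which is non-empty; and every word $v$ in this set satisfies $v^Ru\notin L$ for all $u\in\Sig^*$, hence $v$ lies in no quotient of $L^R$ and therefore in no atom of $L(\cN^\rev)=L^R$. A non-empty set disjoint from every atom cannot be a union of atoms, so $\cN^\rev$ fails to be atomic whenever $\cN$ has a reachable dead state, irrespective of minimality. Combined with your argument in the remaining case ($\cN$ accessible with no dead states, hence genuinely trim, where $\cN^\deter\cong\cN$ and Theorem~\ref{thm:extension} applies), this yields the corollary.
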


\section{Reduced Atomic NFAs of a Given Regular Language}
\label{sec:reduced}

The following properties of reduced atomic NFAs were proved in~\cite{BrTa13}. 
A similar approach was used more informally by Sengoku~\cite{Sen92}.

\begin{theorem}[Legality]
\label{thm:unions}
Suppose $L$ is a regular language,  its \'atomaton is
$\cA=({\mathbf A},\Sig,\alpha, {\mathbf I}_A,\{{\mathbf A}_{p-1}\})$, and
$\cB=({\mathbf B},\Sig,\beta,{\mathbf I}_B,{\mathbf F}_B)$ is a trim NFA, where 
${\mathbf B}=\{{\mathbf B}_1,\ldots,{\mathbf B}_r\}$ is a collection of 
sets of positive atom symbols and ${\mathbf I}_B,{\mathbf F}_B\subseteq{\mathbf B}$.
If ${\mathbf B}'\subseteq{\mathbf B}$, define 
$U({\mathbf B}')=\bigcup_{{\mathbf B}_i\in {\mathbf B}'} {\mathbf B}_i$ to be the set of atom symbols 
appearing in the sets ${\mathbf B}_i$  of ${\mathbf B}'$. 
Then $\cB$ is a reduced atomic NFA of $L$ if and only if it satisfies the following
conditions:
\be
\item
\label{cond:in}
$U({\mathbf I}_B)={\mathbf I}_A$.
\item
\label{cond:trans}
For all ${\mathbf B}_i\in {\mathbf B}$, $U(\beta({\mathbf B}_i,a))=\alpha({\mathbf B}_i,a)$.
\item
\label{cond:out}
For all ${\mathbf B}_i\in {\mathbf B}$, we have ${\mathbf B}_i\in {\mathbf F}_B$ 
if and only if ${\mathbf A}_{p-1}\in {\mathbf B}_i$.
\ee
\end{theorem}

\subsection{Enumerating Reduced Atomic NFAs}
If we allow equivalent states, there is an infinite number of atomic NFAs 
accepting a given regular language, but their behaviours are not all distinct.
Hence we consider only reduced atomic NFAs.
The number of trim reduced atomic NFAs can be very large. 
There can be such NFAs with as many as $2^p-1$ non-empty states, 
since there are that many non-empty sets of positive atoms. 

From now on, we drop the curly brackets and commas when representing 
sets of states or sets of atoms in tables.
For example, $\{012,01\}$ stands for $\{ \{0,1,2\},\{0,1\} \}$, and 
$\{A,AB,AC\}$ is used instead of $\{ \{A\}, \{A,B\},\{A,C\} \}$.

\begin{example}
\label{ex:reducedatomic}
The DFA of Table~\ref{tab:dkw} was used in~\cite{KaWe70}. 
It accepts the language $L=\Sig^*(b\cup aa) \cup a$, where $\Sig=\{a,b\}$. 
The quotients of $L$ are
$K_0=\eps^{-1}L=L$, 
$K_1=a^{-1}L=\Sig^*(b\cup aa) \cup a \cup \eps$, and
$K_2=b^{-1}L=\Sig^*(b\cup aa) \cup \eps$.
NFA $\cD^{\rev\deter\rev\trim}$ and the isomorphic trim \'atomaton $\cA^\trim$ with states renamed  are shown in Tables~\ref{tab:drdrkw} and~\ref{tab:akw}.
The positive atoms are
$A=\Sig^*(b\cup aa)$, $B=a$ and $C=\eps$, and
$K_0=A\cup B$, 
$K_1=A\cup B\cup C$,
and $K_2=A \cup C$.

\begin{table}[b]
\begin{minipage}[b]{0.25\linewidth}
\caption{$\cD$.}
\label{tab:dkw}
{\footnotesize
\begin{center}
$
\begin{array}{|c| c||c| c| }    
\hline
& 
&  a  & b  \\
\hline  
\rightarrow & 0
& 1 & 2 \\
\hline  
\leftarrow& 1
&  1  & 2 \\
\hline  
\leftarrow & 2
 &  0 &  2  \\
\hline  
\end{array}
$
\end{center}}
\end{minipage}
\hspace{0.2cm}
\begin{minipage}[b]{0.3\linewidth}
\caption{$\cD^{\rev\deter\rev\trim}$.}
\label{tab:drdrkw}
{\footnotesize
\begin{center}
$
\begin{array}{|c| c||c| c| }    
\hline
&  
& a & b  \\
\hline  
\leftarrow & 12
& &  \\
\hline  
\rightarrow & 01
&  12 &   \\
\hline  
\rightarrow & 012 
&  012,01  & 012,12   \\
\hline  
\end{array}
$
\end{center}}
\end{minipage}
\hspace{1cm}
\begin{minipage}[b]{0.25\linewidth}
\caption{$\cA^ \trim$.}
\label{tab:akw}
{\footnotesize
\begin{center}
$
\begin{array}{|c| c||c| c| }    
\hline
&
& a & b    \\
\hline  
\leftarrow & C
& &  \\
\hline  
\rightarrow & B
&  C &   \\
\hline  
\rightarrow & A 
&  AB  & AC  \\
\hline  
\end{array}
$
\end{center}}
\end{minipage}
\end{table}

\begin{table}[hbt]
\begin{minipage}[b]{0.45\linewidth}
\caption{NFA $\cB_1$.}
\label{tab:b1}
{\footnotesize
\begin{center}
$
\begin{array}{|c| c||c| c| c|c|}    
\hline
& \ \  \ \ 
&\ \ a \ \ &\ \ b \ \   \\
\hline  
\rightarrow & \ AB \
&  \ AB,AC \  & \ AC \    \\
\hline  
\leftarrow& AC
&  \  AB \  &  AC \\
\hline  
\end{array}
$
\end{center}}
\end{minipage}
\hspace{0.4cm}
\begin{minipage}[b]{0.45\linewidth}
\caption{NFA $\cB_2$.}
\label{tab:b2}
{\footnotesize
\begin{center}
$
\begin{array}{|c| c||c| c| }    
\hline
& \ \  \ \ 
&\ \ a \ \ &\ \ b \ \ \\
\hline  
\rightarrow & AB
& \ AB,C \ & \ AC \ \\
\hline  
\leftarrow& \ C \
&  & \\
\hline  
\leftarrow & AC
 &  AB &  AC  \\
\hline  
\end{array}
$
\end{center}}
\end{minipage}
\end{table}

\begin{table}[t]
\begin{minipage}[b]{0.45\linewidth}
\caption{A 5-state NFA.}
\label{tab:b3}
{\footnotesize
\begin{center}
$
\begin{array}{|c| c||c| c| }    
\hline
& \ \  \ \ 
&\ \ a \ \ &\ \ b \ \ \\
\hline  
\rightarrow & A
& \ A,B \ & \ AC \ \\
\hline  
\rightarrow& \ B \
& C & \\
\hline  
\leftarrow & AC
 &  AB &  AC  \\
\hline  
\leftarrow& \ C \
&  & \\
 \hline  
 & AB
 & \ AB,C \ &  A,C  \\
\hline  
\end{array}
$
\end{center}}
\end{minipage}
\hspace{.4cm}
\begin{minipage}[b]{0.45\linewidth}
\caption{A 7-state NFA.}
\label{tab:b4}
{\footnotesize
\begin{center}
$
\begin{array}{|c| c||c| c| c|c|}    
\hline
& \ \  \ \ 
&\ \ a \ \ &\ \ b \ \   \\
\hline  
\rightarrow & \ A \
&  \ A,B \  &  AC     \\
\hline  
\rightarrow& B
&  C &  \\
\hline  
\leftarrow& AC
&  \  AB \  &  AC \\
\hline  
\leftarrow& C
&    &   \\
\hline  
\rightarrow& AB
&   ABC, BC  &  AC \\
\hline  
\leftarrow& ABC
&    ABC,BC   &  AC \\
\hline  
\leftarrow& BC
&  \  C \  &   \\
\hline  
\end{array}
$
\end{center}}
\end{minipage}
\end{table}

Since $L$ is not of the form $L=K^*$, where $K\subseteq \Sig^*$, 
no 1-state NFA exists for $L$.
\be
\item 
For the initial state we could pick state $\{A,B\}$ with two atoms.  From there, the \'atomaton reaches 
$\{A,B,C\}$ under $a$, and  $\{A,C\}$ under $b$. 
        \be
        \item
If we pick $\{A,C\}$
as the second state,  we can cover $\{A,B,C\}$ by $\{A,B\}$ and 
$\{A,C\}$, as  in Table~\ref{tab:b1}. This minimal
atomic NFA turns out to be unique; it is also minimal among all NFAs.
        \item
        We can use $\{A,B,C\}$ as a state and $\{A,C\}$
        for the transition under $b$. This gives an NFA  isomorphic to the DFA of Table~\ref{tab:dkw}.
        \item
        We can use state $\{C\}$
        as shown in Table~\ref{tab:b2}.
        \ee
\item
We can pick two initial states, $\{A\}$ and $\{B\}$. 
        \be
        \item
        If we add $\{C\}$, this leads to the  \'atomaton of Table~\ref{tab:akw}.
        \item
        A 5-state solution is shown in Table~\ref{tab:b3}.
        \ee
\item
We can use three initial states, $\{A\}$, $\{B\}$ and $\{A,B\}$. 
        A 7-state NFA is shown in  Table~\ref{tab:b4}. This 
           is a largest possible reduced solution.\qedb
\ee

\end{example}

The number of minimal atomic NFAs can also be very large. 
\begin{example}
\label{ex:atomicminimal}
Let $\Sig=\{a,b\}$ and consider the language $L=\Sig^*a\Sig^*b\Sig^*=\Sig^*ab\Sig^*$.
The quotients of $L$ are $K_0=L$, $K_1=L\cup b\Sig^*$ and $K_2=\Sig^*$.
The quotient DFA of $L$ is shown in Table~\ref{tab:d}, and its \'atomaton, in Tables~\ref{tab:a} and~\ref{tab:a_relabel} (where the atoms have been relabelled). 
The atoms  are $A=L$, $B=b^*ba^*$ and $C=a^*$, and there is no negative atom.
Thus the quotients are $K_0=L=A$, $K_1=A\cup B$, and $K_2=A\cup B\cup C$.

We find all the minimal atomic NFAs of $L$.
Obviously, there is no 1-state solution.
The states of any atomic NFA are sets of atoms, and 
there are seven non-empty sets of atoms to choose from. 
Since there is only one initial atom, there is no choice: we must take $\{A\}$.
For the transition $(A,a,\{A,B\})$, we can add $\{B\}$ or $\{A,B\}$. 
If there are only two states, atom $C$ cannot be reached. So there is no  2-state atomic NFA.
The results for 3-state atomic NFAs  are summarized in Proposition~\ref{prop:281}. 
\qedb
\end{example}

\begin{table}[hbt]
\begin{minipage}[b]{0.2\linewidth}
\caption{DFA $\cD$.}
\label{tab:d}
{\footnotesize
\begin{center}
$
\begin{array}{|c|c|| c|c|}    
\hline
 & &   a 
&  b   \\
\hline
\hline
\rightarrow & 0 & 1
& 0  \\
\hline  
 & 1 & 
1 &    2  \\
\hline  
\leftarrow &  2 &  2 
&  2 \\
\hline  
\end{array}
$
\end{center}}
\end{minipage}
\hspace{1cm}
\begin{minipage}[b]{0.32\linewidth}
\caption{\'Atomaton $\cA$.}
\label{tab:a}
{\footnotesize
\begin{center}
$
\begin{array}{|c| c||c| c| }    
\hline
&
&a  & b    \\
\hline  
\leftarrow & 2
& 2 &  \\
\hline  
 & 12
&   & 12,2  \\
\hline  
\rightarrow &  012 
&  012,12  &  012   \\
\hline  
\end{array}
$
\end{center}}
\end{minipage}
\hspace{0.35cm}
\begin{minipage}[b]{0.35\linewidth}
\caption{$\cA$ relabelled.}
\label{tab:a_relabel}
{\footnotesize
\begin{center}
$
\begin{array}{|c| c||c| c| }    
\hline
& 
& a & b   \\
\hline  
\leftarrow & C
& C &  \\
\hline  
 & B
&   &   BC  \\
\hline  
\rightarrow &  A 
&  AB  &  A   \\
\hline  
\end{array}
$
\end{center}}
\end{minipage}
\end{table}

\begin{proposition} 
\label{prop:281}
The language $\Sig^*ab\Sig^*$ has 281 minimal atomic NFAs. 
\end{proposition}
\begin{proof}

The only initial state of the \'atomaton $\cA$ corresponds to atom $A$, 
so $\{A\}$ must be included.
To implement the transition
$(A,a,\{A,B\})$ from $\cA$,
either $\{B\}$ or $\{A,B\}$ must be chosen. 
\be
\item
If $\{B\}$ is chosen, then there must be a set containing $C$ but not $A$; otherwise 
the transition 
 $(B,b,\{B,C\})$ cannot be realized.
        \be
        \item
        If $\{B,C\}$ is taken, then $\{C\}$ must be taken, and this makes four states.
        \item
         Hence $\{C\}$ must be chosen, yielding the \'atomaton $\cA=\cN_1$.
         \ee
\item
If $\{A,B\}$ is chosen, then we could choose $\{C\}$, $\{A,C\}$ or $\{A,B,C\}$, 
since $\{B,C\}$ would also require $\{C\}$. Thus there are three cases:
        \be
        \item
        $\{\{A\},\{A,B\},\{C\}\}$ yields $\cN_2$ of Table~\ref{tab:fn1}, if the minimal number of 
        transitions is used. 
        The following transitions can also be added: 
        $(\{A\},a,\{A\})$, $(\{A,B\},a,\{A\})$, $(\{A,B\},b,\{A\})$.
        Since these can be added independently, we have eight more NFAs. 
        Using the maximal number of transitions, we get $\cN_9$ of Table~\ref{tab:fn9}.
        \item
        $\{\{A\},\{A,B\},\{A,C\}\}$ results in $\cN_{10}$ with the minimal number 
          of transitions, and $\cN_{25}$ with the maximal one.
        \item
        $\{\{A\},\{A,B\},\{A,B,C\}\}$ results in $\cN_{26}$ (the quotient DFA) 
          with the minimal number of transitions, and  $\cN_{281}$ with the maximal one.
        \ee
\ee

\begin{table}[h]
\begin{minipage}[b]{0.45\linewidth}
\caption{NFA $\cN_2$.}
\label{tab:fn1}
{\footnotesize
\begin{center}
$
\begin{array}{|c| c||c| c| }    
\hline
& \ \  \ \ 
&\ \ a \ \ &\ \ b \ \ \\
\hline  
\rightarrow & A
& \ AB \ & \ A \ \\
\hline  
 & AB
&  AB  & AB,C \\
\hline  
\leftarrow & \ C \
 & \ C \ & \  \ \\
\hline  
\end{array}
$
\end{center}}
\end{minipage}
\hspace{0.1cm}
\begin{minipage}[b]{0.45\linewidth}
\caption{NFA $\cN_9$.}
\label{tab:fn9}
{\footnotesize
\begin{center}
$
\begin{array}{|c| c||c| c| c|c|}    
\hline
& \ \  \ \ 
&\ \ a \ \ &\ \ b \ \   \\
\hline  
\rightarrow & \ A \
&  \ A,AB \  &  \  A  \   \\
\hline  
 & AB
&  \  A,AB \  & \ A,AB,C \ \\
\hline  
\leftarrow & C
&   \ C  \  &      \\
\hline  
\end{array}
$
\end{center}}
\end{minipage}
\end{table}

\begin{table}[hbt]
\begin{minipage}[b]{0.45\linewidth}
\caption{NFA $\cN_{10}$.}
\label{tab:fn10}
{\footnotesize
\begin{center}
$
\begin{array}{|c| c||c| c| }    
\hline
& \ \  \ \ 
&\ \ a \ \ &\ \ b \ \ \\
\hline  
\rightarrow & A
& \ AB \ & \ A \ \\
\hline  
 & AB
&  AB  & \  AB,AC \ \\
\hline  
\leftarrow & \ AC \
 & \ AB,AC \ & \ A \ \\
\hline  
\end{array}
$
\end{center}}
\end{minipage}
\hspace{0.2cm}
\begin{minipage}[b]{0.45\linewidth}
\caption{NFA $\cN_{25}$.}
\label{tab:fn25}
{\footnotesize
\begin{center}
$
\begin{array}{|c| c||c| c| c|c|}    
\hline
& \ \  \ \ 
&\ \ a \ \ &\ \ b \ \   \\
\hline  
\rightarrow & \ A \
&  \ A,AB \  &  \  A  \   \\
\hline  
 & AB
&  \  A,AB \  & \ A,AB,AC \ \\
\hline  
\leftarrow & AC
&   \ A,AB,AC  \  &   A   \\
\hline  
\end{array}
$
\end{center}}
\end{minipage}
\end{table}

\begin{table}[h]
\begin{minipage}[b]{0.45\linewidth}
\caption{NFA $\cN_{26}$.}
\label{tab:fn26}
{\footnotesize
\begin{center}
$
\begin{array}{|c| c||c| c| }    
\hline
& \ \  \ \ 
&\ \ a \ \ &\ \ b \ \ \\
\hline  
\rightarrow &A
& \ AB \ & \ A \ \\
\hline  
 & AB
&  AB  & \  ABC \ \\
\hline  
\leftarrow & \ ABC \
 & \ ABC \ & \ ABC \ \\
\hline  
\end{array}
$
\end{center}}
\end{minipage}
\hspace{0.2cm}
\begin{minipage}[b]{0.45\linewidth}
\caption{NFA $\cN_{281}$.}
\label{tab:fn281}
{\footnotesize
\begin{center}
$
\begin{array}{|c| c||c| c| c|c|}    
\hline
& \ \  \ \ 
&\ \ a \ \ &\ \ b \ \   \\
\hline  
\rightarrow & \ A \
&  \ A,AB \  &  \ A  \   \\
\hline  
 & AB
&  \  A,AB \  & \ A,AB,ABC \ \\
\hline  
\leftarrow & ABC
&   \ A,AB,ABC \  &   A,AB,ABC   \\
\hline  
\end{array}
$
\end{center}}
\end{minipage}
\end{table}

Also $L$ has 3-state non-atomic NFAs.
The determinized version of NFA $\cN_{10}$ of Table~\ref{tab:fn10} is not minimal.
By Theorem~\ref{thm:extension}, $\cN_{10}^\rev$ is not atomic. But $L^R=\Sig^*ba\Sig^*$;
hence we obtain a non-atomic 3-state NFA for $L$ by reversing $\cN_{10}$ and interchanging $a$ and $b$. There are other non-atomic 3-state solutions.
\end{proof}

One can verify that there is no NFA with fewer than 3 states which
accepts the language $L=\Sig^*ab\Sig^*$.
This implies that every minimal atomic NFA of $L$ is also 
a minimal NFA of $L$.
However, this is not the case with all regular languages, as we will see next.

\subsection{Atomic Minimal NFAs}
 
Recall that Sengoku 
defines an NFA $\cN$ to be in \emph{standard form}~\cite{Sen92}(p.~19) 
if $\cN^{\rev\deter}$ is minimal.
By our Corollary~\ref{cor:atomic}, such an $\cN$ is atomic.
Sengoku makes the following claim~\cite{Sen92}(p.~20):
\begin{quote}
\vskip-0.1cm
\emph{We can transform the nondeterministic automaton into its standard form 
by adding some extra transitions to the automaton. Therefore the number of 
states is unchangeable.}
\end{quote}
\vskip-0.1cm
This claim amounts to stating that any NFA can be transformed to an equivalent  
atomic NFA by adding some transitions. Unfortunately, it is false:
\begin{theorem}
\label{thm:Sengoku}
There exists a language for which no minimal NFA is atomic.
\end{theorem}
\begin{proof}
\vskip-0.1cm
The regular language $L_1$ accepted by DFA $\cD$ of Table~\ref{tab:d_mp}  
is the same as that of an NFA considered in~\cite{MaPo95}(p.~80, Sect.~3). 
NFA $\cD^{\rev\deter\rev}$ and its isomorphic \'atomaton $\cA$ 
with relabelled states are in Tables~\ref{tab:drdr_mp} and~\ref{tab:a_mp}, 
respectively (there is no negative atom). 

Recall that a ``fooling set'' for a regular language $L$ is a set $S=\{(x_i,y_i)\mid x_i,y_i\in\Sig^*, i=1,2,\dots,k \}$ such that $x_iy_i\in L$ for all $i$, and either $x_iy_j\not\in L$ or $x_jy_i\not\in L$ for all $i\neq j$. It is known that every NFA of $L$ needs at least $k$ states, if it has a fooling set of cardinality $k$~\cite{Bir92}.
One verifies that $\{(\eps,b), (a,bb), (aa,bbb), (b,\eps)\}$ is a fooling set for $L_1$. 
Hence every NFA for $L_1$ requires at least four states.

A minimal NFA $\cN_{min}$ of $L_1$ having four states is shown in 
Table~\ref{tab:n_mp}; it is not atomic and it is not unique. 
We try to construct a 4-state atomic NFA $\cN_{atom}$ equivalent to $\cD$. 
\begin{table}[hbt]
\begin{minipage}[b]{0.19\linewidth}
\caption{$\cD$.}
\label{tab:d_mp}
{\footnotesize
\begin{center}
$
\begin{array}{|c|c|| c|c|}    
\hline
 & &  a 
&  b   \\
\hline
\hline
\rightarrow & 0 
& 1 & 2  \\
\hline  
 & 1 
 &  3 &    4  \\
\hline  
\leftarrow & \ 2 \ 
&  5 &  4  \\
\hline  
 & 3 & 
3 &    1  \\
\hline  
 & 4 & 
6 &    2  \\
\hline  
\leftarrow & 5 & 
7 &    2  \\
\hline  
 & 6 & 
3 &    8  \\
\hline  
\leftarrow & 7 & 
7 &    7  \\
\hline  
 & 8 & 
6 &    7  \\
\hline  
\end{array}
$
\end{center}}
\end{minipage}
\hspace{0.03cm}
\begin{minipage}[b]{0.38\linewidth}
\caption{$\cD^{\rev\deter\rev}$.}
\label{tab:drdr_mp}
{\footnotesize
\begin{center}
$
\begin{array}{|c| c||c| c| }    
\hline
& 
& a  & b   \\
\hline  
\leftarrow & 257
& 257,04578 &  \\
\hline  
\rightarrow & 04578
&  12678 &   257  \\
\hline  
 & 12678 
&  & 04578,03-8   \\
\hline  
\rightarrow & 03-8
&   &   12678  \\
\hline  
 & 1-8
 &  03-8 &   \\
\hline  
\rightarrow & 0-8
&  1-8,0-8 &   1-8,0-8  \\
\hline  
\end{array}
$
\end{center}}
\end{minipage}
\hspace{0.83cm}
\begin{minipage}[b]{0.32\linewidth}
\caption{ $\cA$.}
\label{tab:a_mp}
{\footnotesize
\begin{center}
$
\begin{array}{|c| c||c| c| }    
\hline
&
& a & b    \\
\hline  
\leftarrow & A
& AB &  \\
\hline  
\rightarrow & B
&  C & A  \\
\hline  
 &  C 
&   & \ BD \  \\
\hline  
\rightarrow &  D 
&   & C \\
\hline  
 &  E 
& D  &     \\
\hline  
\rightarrow &  F 
&  EF  &  EF \\
\hline  
\end{array}
$
\end{center}}
\end{minipage}
\end{table}
First, we note that quotients corresponding to the states of $\cD$ can be expressed 
as sets of atoms as follows:
$K_0=\{B,D,F\}$, $K_1=\{C,E,F\}$, $K_2=\{A,C,E,F\}$, $K_3=\{D,E,F\}$,
$K_4=\{B,D,E,F\}$, $K_5=\{A,B,D,E,F\}$, $K_6=\{C,D,E,F\}$, $K_7=\{A,B,C,D,E,F\}$, and
$K_8=\{B,C,D,E,F\}$. One can verify that these are the states of the determinized 
version of the \'atomaton, which is isomorphic to the original DFA $\cD$. 
Now, every state of $\cN_{atom}$ must be a subset of a set of atoms of some quotient, 
and all these sets of atoms of quotients must be covered by the states of $\cN_{atom}$.
We note that quotients $\{B,D,F\}$, $\{C,E,F\}$, and $\{D,E,F\}$
do not contain any other quotients as subsets, while all the other quotients do.
It is easy to see that there is no combination of three or fewer sets of atoms, 
other than these three sets, that can cover these quotients. 
Since our aim is to find a four-state atomic NFA, and because we also need a set 
containing the atom $A$, we have to use these three sets as states of $\cN_{atom}$. 
To use only one set of atoms with $A$, that set has to be
a subset of every quotient having $A$. So it must be
a subset of $\{A,E,F\}$. If we use $\{A\}$ as a state, then by the transition 
table of the \'atomaton, there must be at least one more state to cover 
$\{A,B\}$. Similarly, if we use $\{A,E\}$, then we must have another state to cover 
$\{A,B,D\}$. If we use $\{A,F\}$, then we must have a state to cover 
$\{A,B,E,F\}$. And if we use $\{A,E,F\}$, then we must have a state to cover 
$\{E,F\}$. We conclude that a smallest atomic NFA has at least five states.
There is a five-state atomic NFA, as 
shown in Table~\ref{tab:n5_mp}. It is not unique. 

Since there does not exist a four-state atomic NFA equivalent to the DFA $\cD$,
it is not possible to convert the non-atomic 
minimal NFA $\cN_{min}$ to an atomic NFA by adding transitions.
\end{proof}

\begin{table}[t]
\begin{minipage}[b]{0.3\linewidth}
\caption{NFA $\cN_{min}$.}
\label{tab:n_mp}
{\footnotesize
\begin{center}
$
\begin{array}{|c|c|| c|c|}    
\hline
 & & \  a \
& \ b \  \\
\hline
\hline
\rightarrow & 0 
& \ 1 \ & 1,2  \\
\hline  
 & 1 
 & 3 & \  0,3 \ \\
\hline  
\leftarrow & \ 2 \ 
& \ 0,2,3 \ &  \\
\hline  
 & \ 3 \ 
& \ 3 \ & 1 \\
\hline  
\end{array}
$
\end{center}}
\end{minipage}
\hspace{0.5cm}
\begin{minipage}[b]{0.45\linewidth}
\caption{$\cN_{atom}$.}
\label{tab:n5_mp}
{\footnotesize
\begin{center}
$
\begin{array}{|c| c||c| c| }    
\hline
& \ \  \ \ 
&\ \ a \ \ &\ \ b \ \   \\
\hline  
\rightarrow & BDF
& CEF &CEF,AEF  \\
\hline  
 & \ CEF \
& DEF  & \ BDF,DEF \  \\
\hline  
\leftarrow & AEF
&  \ BDF, AEF, DEF \ & EF  \\
\hline  
 & \ DEF \
& DEF  & \ CEF \  \\
\hline  \hline
 & \ EF \
&\ DEF  \ & EF \\
\hline  
\end{array}
$
\end{center}}
\end{minipage}
\vskip-0.3cm
\end{table}

\vskip-0.1cm
In summary, Sengoku's method cannot always find the minimal NFAs, but 
 it is able to find all atomic minimal NFAs.
His minimization algorithm proceeds by 
``merging some states of the normal nondeterministic automaton''.
This is similar to our search for subsets of atoms that satisfy 
Theorem~\ref{thm:unions}.

\section{Conclusions}
\label{sec:conc}
For any NFA $\cN$, we introduced a natural set of languages, the partial atoms, and constructed a new NFA, which we proved to be isomorphic to $\cN^{RDR}$---an NFA studied by Sengoku.
For any regular language $L$, we introduced a natural set of languages, the atoms of $L$; we then constructed an NFA $\cA$, the \'atomaton of $L$, which we proved to be isomorphic to the NFA $\cN^{RDMR}$, also studied by Sengoku.
 We  introduced atomic automata, and generalized 
Brzozowski's method of minimization of DFAs by double reversal.
We studied atomic NFAs associated with a given regular language and,
contrarily to Sengoku's claim, proved that not every language has an atomic 
minimal NFA.

For completeness we mention that the quotient complexity (equivalent to 
state complexity) of atoms of regular languages was studied in~\cite{BrTa13} and~\cite{BrDa13}.

\end{document}